\newcommand{\email}[1]{\url{#1}}
\newtheorem{theorem}{Theorem}[section]
\newtheorem{corollary}{Corollary}[section]
\newtheorem{lemma}{Lemma}[section] 
\newtheorem*{claim*}{Claim}
\newtheorem{proposition}{Proposition}[section]
\theoremstyle{definition}      
\newtheorem{definition}{Definition}[section]
\newtheorem{example}{Example}[section]
\title{Amortised Resource Analysis and Typed Polynomial Interpretations\\
(extended version)\thanks{This research is partly supported by FWF (Austrian Science Fund) project P25781.}}
\author{Martin Hofmann\\
Institute of Computer Science,\\
LMU Munich, Germany\\
email: \email{hofmann@ifi.lmu.de}
\and Georg Moser\\
Institute of Computer Science,\\
University of Innsbruck, Austria,\\
email: \email{georg.moser@uibk.ac.at}
}
\begin{document}

\maketitle

\begin{abstract}
We introduce a novel resource analysis for
typed term rewrite systems based on a potential-based
type system. This type system gives rise to polynomial bounds 
on the innermost runtime complexity. We relate the thus obtained amortised resource analysis to 
polynomial interpretations and obtain the perhaps surprising result 
that whenever a rewrite system $\RS$ can be
well-typed, then there exists a polynomial interpretation 
that orients $\RS$. For this we adequately adapt the standard notion
of polynomial interpretations to the typed setting.

\medskip
\noindent
\emph{Key words}: Term Rewriting, Types,
  Amortised Resource Analysis,
  Complexity of Rewriting, 
  Polynomial Interpretations
\end{abstract}

\section{Introduction}
\label{Introduction}

In recent  years there have  been several approaches to  the automated
analysis of the  complexity of programs. Mostly  these approaches have
been  developed  independently  in  different communities  and  use  a
variety of different,  not easily comparable techniques. 
Without hope for completeness,
we mention work by Albert et al.~\cite{AAGP:2011} that underlies \costa, an automated tool
for the resource analysis of \Java\ programs. 
Related work, targeting \C\ programs, has been reported by 
Alias et al.~\cite{ADFG:2010}. In Zuleger et al.~\cite{ZGSV:2011} 
further approaches for the runtime complexity analysis of \C\ programs is reported, 
incorporated into \loopus. Noschinski et al.~\cite{NoschinskiEG13} 
study runtime complexity analysis of rewrite systems, 
which has been incorporated in \aprove. Finally,
the \raml\ prototype~\cite{HAH:2012} provides an automated potential-based 
resource analysis for various resource bounds of functional programs 
and~\tct~\cite{AvanziniM13a} is one of the most powerful tools 
for complexity analysis of rewrite systems. 

Despite the abundance in the literature almost no comparison results
are known that relate the sophisticated methods developed. Indeed a precise
comparison often proves difficult. For example, on the surface there is an obvious
connection between the decomposition techniques established by Gulwani and
Zuleger in~\cite{GulwaniZuleger:2010} and recent advances on this topic in the complexity
analysis of rewrite systems, cf.~\cite{AvanziniM13}. 
However, when investigated in detail, precise comparison results are difficult to
obtain.
We exemplify the situation with a simple example that will also
serve as running example throughout the paper.

\begin{example}
\label{ex:1}
Consider the following term rewrite system (TRS for short) $\RSa$, 
encoding a variant of an example by Okasaki~\cite[Section~5.2]{Okasaki:1999}.
\begin{alignat*}{4}
1\colon && \checkF(\queue(\nil,r)) &\to \queue(\rev(r),\nil) 
&
\hspace{6ex}
7 \colon && \enq(\zero) &\to \queue(\nil,\nil)
\\
2\colon &\!& \checkF(\queue(x \cons xs,r)) &\to \queue(x \cons xs,r)
&
\hspace{6ex}
8\colon && \revp(\nil,ys) & \to ys
\\
3\colon && \tail(\queue(x \cons f,r)) &\to \checkF(\queue(f,r))
&
\hspace{6ex}
9 \colon && \rev(xs) &\to  \revp(xs,\nil)
\\
4\colon && \snoc(\queue(f,r),x) &\to \checkF(\queue(f,x \cons r))
&
\hspace{6ex}
10\colon &\!& \head(\queue(x \cons f,r)) &\to x
\\
5\colon && \revp(x \cons xs,ys) &\to \revp(xs,x \cons ys)
&
\hspace{6ex}
11\colon && \head(\queue(\nil,r)) &\to \errorHead
\\
6 \colon && \enq(\mS(n)) &\to \snoc(\enq(n),n)
&
\hspace{6ex}
12\colon && \tail(\queue(\nil,r)) &\to \errorTail
\end{alignat*}
$\RSa$ encodes an efficient implementation of a queue in
functional programming. A queue is represented as a pair
of two lists $\queue(f,r)$, encoding the initial part~$f$ 
and the reversal of the remainder~$r$. Invariant of the algorithm is
that the first list never becomes empty, which is achieved by 
reversing $r$ if necessary. Should the invariant ever be violated,
an exception ($\errorHead$ or $\errorTail$) is raised. 
\end{example}

We exemplify the physicist's method of amortised analysis~\cite{Tarjan:1985}. 
We assign to every queue $\queue(f,r)$ the length of $r$ as \emph{potential}. Then the amortised 
cost for each operation is constant, as the costly reversal operation
is only executed if the potential can pay for the operation, 
compare~\cite{Okasaki:1999}. Thus, based on an amortised analysis, we
deduce the optimal linear runtime complexity for $\RS$.

On the other hand let us attempt an application of the interpretation method 
to this example. Termination proofs by interpretations are well-established
and can be traced back to work by Turing~\cite{Turing:49}. We note that $\RSa$
is polynomially terminating. Moreover, it is rather straightforward
to restrict so-called \emph{polynomial interpretations}~\cite{BN98} suitably so that
compatibility of a TRS $\RS$ induces polynomial runtime complexity, 
cf.~\cite{BonfanteCMT01}. Such polynomial interpretations are called \emph{restricted}.
However, it turns out that no restricted polynomial interpretation can exist
that is compatible with $\RSa$. The reasoning is simple. The constraints induced by 
$\RSa$ imply that the function $\snoc$ has to be interpreted by a linear polynomial. 
Thus an exponential interpretation is required for enqueuing ($\enq$). 
Looking more closely at the different proofs, we observe the following. While in
the amortised analysis the potential of a queue $\queue(f,r)$ depends only
on the remainder $r$, the interpretation of $\queue$ has to be monotone in
both arguments by definition. This difference induces that $\snoc$ is
assigned a strongly linear potential in the amortised analysis, 
while only a linear interpretation is possible for $\snoc$. 

Still it is possible to precisely relate
amortised analysis to polynomial interpretations if we base our
investigation on many-sorted (or typed) TRSs and make suitable use of the
concept of \emph{annotated types} originally introduced in~\cite{HofmannJ03}.
More precisely, we establish the following results.
We establish a novel runtime complexity analysis for  
typed constructor rewrite systems. This complexity analysis is 
based on a potential-based amortised analysis incorporated into a type system.
From the annotated type of a term its derivation height with respect to innermost rewriting
can be read off (see Theorem~\ref{l:2}).
The correctness proof of the obtained bound rests
on a suitable big-step semantics for rewrite systems, 
decorated with counters for the derivation height of the evaluated
terms. We complement this big-step semantics with a similar decorated
small-step semantics and prove equivalence between these semantics. Furthermore
we strengthen our first result by a similar soundness result based on
the small-step semantics (see Theorem~\ref{l:5}).
Exploiting the small-step semantics we prove our main result
that from the well-typing of a TRS $\RS$ we can read off a typed
polynomial interpretation that orients $\RS$ (see Theorem~\ref{l:8}). 

While the type system exhibited is inspired by Hoffmann et al.~\cite{HoffmannH10a}
we generalise their use of annotated types to arbitrary (data) types. Furthermore
the introduced small-step semantics (and our main result) directly establish
that any well-typed TRS is terminating, thus circumventing the notion of
partial big-step semantics introduced in~\cite{HoffmannH10b}. Our main result
can be condensed into the following observations. 
The physicist's method of amortised analysis conceptually amounts to the interpretation
method if we allow for the following changes: 
\begin{itemize}
\item Every term bears a potential, not only constructor terms.
\item Polynomial interpretations are defined over annotated types.
\item The standard compatibility constraint is weakened to orientability, that is,
all ground instances of a rule strictly decrease.
\end{itemize}
Our study is purely theoretic, and we have not (yet) attempted an implementation
of the provided techniques. However, automation appears straightforward. Furthermore
we have restricted our study to typed (constructor) TRSs. 
In the conclusion we sketch application of the established results to innermost
runtime complexity analysis of untyped TRSs. 

\medskip
This paper is structured as follows. In the next section we cover some basics
and introduce a big-step operational semantics for typed TRSs. In Section~\ref{AnnotatedTypes}
we clarify our definition of annotated types and provide the mentioned type system. We also
present our first soundness result. In Section~\ref{SmallStep} we 
introduce a small-step operational semantics and prove our second soundness result. 
Our main result will be stated and proved in Section~\ref{TypedPI}. Finally, we conclude
in Section~\ref{Conclusion}, where we also mention future work. 

\section{Typed Term Rewrite Systems}
\label{TypedTRS}

Let $\CS$ denote a finite, non-empty set of \emph{constructor symbols} and
$\DS$ a finite set of \emph{defined function symbols}.
Let $S$ be a finite set of (data) types. A family
$(X_A)_{A \in S}$ of sets is called \emph{$S$-typed} and denotes as $X$. 
Let $\VS$ denote an $S$-typed set of \emph{variables}, such that
the $\VS_s$ are pairwise disjoint. In the following, variables
will be denoted by $x$, $y$, $z$, \dots, possibly extended
by subscripts.

Following~\cite{JR99}, a \emph{type declaration} is of form 
$\typdcl{A_1 \times \cdots \times A_n}{C}$, where $A_i$ and $C$ are types. 
Type declarations serve as input-output specifications for function symbols. 
We write $A$ instead of $\typdcl{}{A}$.
A \emph{signature} $\FS$ (with respect to the set of types $S$) 
is a mapping from $\CS \cup \DS$ to type declarations. 
We often write $\typed{f}{\typdcl{A_1 \times \cdots \times A_n}{C}}$
if $\FS(f) = \typdcl{A_1 \times \cdots \times A_n}{C}$ and
refer to a type \emph{declaration} as a type, if no confusion can arise.
We define the $S$-typed set of terms $\TA(\DS\cup\CS,\VS)$
(or $\TA$ for short): (i) for each $A \in S$: $\VS_A \subseteq \TA_{A}$,
(ii) for $f \in \CS \cup \DS$ such that $\FS(f) = \typdcl{A_1,\dots,A_n}{A}$
and $t_i \in \TA_{A_i}$, we have $f(t_1,\dots,t_n) \in \TA_{A}$.
Type assertions are denoted $\typed{t}{C}$. 
Terms of type $A$ will sometimes be referred to as instances of $A$: 
a term of list type, is simply called a list.
If $t \in \TA(\CS,\varnothing)$ then $t$ is called a \emph{ground constructor term}
or a \emph{value}. The set of values is denoted $\TA(\CS)$.
The ($S$-typed) set of variables of a term $t$ is denoted $\Var(t)$.
The root of $t$ is denoted $\rt(t)$ and the size of $t$, that is
the number of symbols in $t$, is denoted as $\size{t}$. 
In the following, terms are denoted by $s$, $t$, $u$, \dots, possibly extended
by subscripts. Furthermore, we use $v$ (possibly indexed) to denote values.

A \emph{substitution} $\sigma$ is a mapping from variables to terms that respects
types. Substitutions are denoted as sets of assignments: 
$\sigma = \{x_1 \mapsto t_1, \dots, x_n \mapsto t_n\}$.
We write $\dom(\sigma)$ ($\range(\sigma)$) to denote the domain (range) of $\sigma$; 
$\Vrg(\sigma) \defsym \Var(\range(\sigma))$.
Let $\sigma$ be a substitution and $V$ be a set of variables;
$\rst{\sigma}{V}$ denotes the restriction of the domain of $\sigma$ to $V$.
The substitution $\sigma$ is called a \emph{restriction} of a substitution
$\tau$ if $\rst{\tau}{\dom(\sigma)} = \sigma$. Vice versa, $\tau$ is
called \emph{extension} of $\sigma$. Let $\sigma$, $\tau$ be substitutions
such that $\dom(\sigma) \cap \dom(\tau) = \varnothing$. Then we
denote the (disjoint) union of $\sigma$ and $\tau$ as $\sigma \dunion \tau$.
We call a substitution $\sigma$ \emph{normalised} if all terms in the
range of $\sigma$ are values. In the following, all considered substitutions
will be normalised. 

A \emph{typing context} is a mapping from variables $\VS$ to types. 
Type contexts are denoted by upper-case Greek letters.
Let $\Gamma$ be a context and let $t$ be a term. 
The typing relation $\tjudge{\Gamma}{}{}{\typed{t}{A}}$ expresses 
that based on context $\Gamma$, $t$ has type $A$ (with respect to the signature $\FS$). 
The typing rules that define the typing relation are given in Figure~\ref{fig:2}, where we
forget the annotations. 
In the sequel we sometimes make use of an abbreviated notation for sequences of 
types $\vec{A} = A_1,\dots,A_n$ and terms $\vec{t} \defsym t_1,\ldots,t_n$. 

A typed rewrite rule is a pair $l \to r$ of terms, such that 
(i) the type of $l$ and $r$ coincides, 
(ii) $\rt(l) \in \DS$,  and (iii) $\Var(l) \supseteq \Var(r)$.
An $S$-typed \emph{term rewrite system} (\emph{TRS} for short) over the signature
$\FS$ is a finite set of typed rewrite rules. 
We define the \emph{innermost rewrite relation} $\rsrew$ for typed TRSs $\RS$.
For terms $s$ and $t$, $s \rsrew t$ holds, if there exists a context
$C$, a normalised substitution $\sigma$ and a rewrite rule ${l \to r} \in \RS$
such that $s = C[l\sigma]$, $t=C[r\sigma]$ and $s$, $t$ are well-typed.
In the sequel we are only concerned with \emph{innermost} rewriting.
A TRS is \emph{orthogonal} if
it is left-linear and non-overlapping~\cite{BN98,TeReSe}.
A TRS is \emph{completely defined} if all ground normal-forms are values.
These notions naturally extend to typed TRS. In particular, note
that an orthogonal typed TRS is confluent.

\begin{definition}
\label{d:runtimecomplexity}
We define the \emph{runtime complexity} (with respect to $\RS$) as follows:
\begin{equation*}
  \rc(n) \defsym \max \{ \dheight(t,\to) \mid \text{$t$ is basic and 
    $\size{t} \leqslant n$}\} \tkom
\end{equation*}
where a term $t = f(t_1,\dots,t_k)$ is called \emph{basic}
if $f$ is defined, and the terms $t_i$ are only built over 
constructors and variables.
\end{definition}

\begin{figure}[t]
  \centering
  \begin{tabular}{c} 
   $\infer{\eval{\sigma}{0}{x}{v}}{%
     x\sigma = v%
     }$
   \hfill
   $\infer{\eval{\sigma}{0}{c(x_1,\dots,x_n)}{c(v_1,\dots,v_n)}}{%
       c \in \CS
       &
       x_1\sigma = v_1 
       & \cdots
       & x_n\sigma = v_n%
     }$
   \\[3ex]
   $\infer{\eval{\sigma}{m+1}{f(x_1,\ldots,x_n)}{v}}{%
     f(l_1,\ldots,l_n) \to r \in \RS
     & \exists \tau \ \forall i\colon x_i\sigma = l_i\tau
     & \eval{\sigma \dunion \tau}{m}{r}{v}
   }$
   \\[3ex]
   $\infer{\eval{\sigma}{m}{f(t_1,\ldots,t_n)}{v}}{%
     \begin{minipage}[b]{30ex}
       all $x_i$ are fresh \hfil
       \\[1.5ex]
       $\eval{\sigma \dunion \rho}{m_0}{f(x_1,\ldots,x_n)}{v}$ \hfill
     \end{minipage}
     & \eval{\sigma}{m_1}{t_1}{v_1}
     & \cdots
     & \eval{\sigma}{m_n}{t_n}{v_n}
     & m = \sum_{i=0}^n m_i%
   }$
  \end{tabular}
  
\bigskip
\noindent
  Here $\rho \defsym \{x_1 \mapsto v_1,\ldots,x_n \mapsto v_n\}$.
  Recall that $\sigma$, $\tau$, and $\rho$ are normalised. 
  \caption{Operational Big-Step Semantics}
  \label{fig:1}
\end{figure}

We study \emph{typed} \emph{constructor} TRSs $\RS$, 
that is, for each rule $f(l_1,\dots,l_n) \to r$, the $l_i$ 
are constructor terms. Furthermore, we restrict to \emph{completely defined}
and \emph{orthogonal} systems. These restrictions are natural in the
context of functional programming. 
If no confusion can arise from this, we simply call $\RS$ a TRS.
$\FS$ denotes the signature underlying $\RS$. 
In the sequel, $\RS$  and $\FS$ are kept fixed.

\begin{example}[continued from Example~\ref{ex:1}]
\label{ex:2}
Consider the TRS $\RSa$ and let $S = \{\Nat,\List,\Queue\}$, where
$\Nat$, $\List$, and $\Queue$ represent the type of natural numbers, 
lists over over natural number, and queues respectively. 
Then $\RSa$ is an $S$-typed TRSs over signature $\FS$, where the signature
of some constructors is as follows: 
\begin{alignat*}{4}
  \zero\colon &\Nat & \hspace{6ex}
  \mS\colon & \typdcl{\Nat}{\Nat} & 
  \queue\colon & \typdcl{\List \times \List}{\Queue}
  \\
  \nil\colon & \List & \hspace{6ex}
  \cons\colon & \typdcl{\Nat \times \List}{\List} \tpkt \hspace{5ex}
\end{alignat*}
In order to exemplify the type declaration of defined function symbols, consider
\begin{equation*}
 \typed{\snoc}{\typdcl{\Queue \times \Nat}{\Queue}} \tpkt 
\end{equation*}
\end{example}

As $\RS$ is completely defined any derivation ends in a value. On the other hand, as 
$\RS$ is non-overlapping any innermost derivation is determined modulo the order 
in which parallel redexes are contracted.
This allows us to recast innermost rewriting into an operational
big-step semantics instrumented with resource counters, cf.~Figure~\ref{fig:1}.
The semantics closely resembles similar definitions given in
the literature on amortised resource analysis (see for example~\cite{JLHSH09,HoffmannH10a,HAH12b}).

Let $\sigma$ be a (normalised) substitution and let $f(x_1,\dots,x_n)$ be a term. 
It follows from the definitions that $f(x_1\sigma,\dots,x_n\sigma) \rssrew v$
iff $\eval{\sigma}{}{f(x_1,\dots,x_n)}{v}$. More, precisely we have the
following proposition. 

\begin{proposition}
\label{p:1}
Let $f$ be a defined function symbol of arity $n$ and $\sigma$
a substitution. Then $\eval{\sigma}{m}{f(x_1,\dots,x_n)}{v}$  
holds iff $\dheight(f(x_1\sigma,\dots,x_n\sigma),\rsrew) = m$ holds.
\end{proposition}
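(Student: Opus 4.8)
The plan is to prove Proposition~\ref{p:1} by establishing the two directions of the equivalence, exploiting the fact that $\RS$ is completely defined and orthogonal (hence innermost rewriting is confluent and every derivation from a well-typed term terminates in a value, the derivation height being well-defined). I would first argue that the big-step semantics of Figure~\ref{fig:1} is a faithful encoding of innermost rewriting in the sense that $\eval{\sigma}{m}{t}{v}$ holds for \emph{some} $m$ if and only if $t\sigma \rssrew v$; the content of the proposition is then the precise bookkeeping that the particular $m$ equals $\dheight(t\sigma,\rsrew)$. It is convenient to prove the slightly more general statement for arbitrary terms $t$ (not just $f(x_1,\dots,x_n)$) with normalised $\sigma$: $\eval{\sigma}{m}{t}{v}$ iff $t\sigma \rssrew[m] v$, i.e. $t\sigma$ reduces to $v$ in exactly $m$ innermost steps; the proposition is the instance $t = f(x_1,\dots,x_n)$.

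For the direction from left to right I would induct on the derivation of $\eval{\sigma}{m}{t}{v}$, with a case analysis on the last rule applied. The two axioms ($t$ a variable, or $t = c(x_1,\dots,x_n)$ with $c \in \CS$) give $m = 0$ and $t\sigma = v$ already a value, so $\dheight(t\sigma,\rsrew) = 0$. For the rule-application rule, $t = f(x_1,\dots,x_n)$, $f(l_1,\dots,l_n)\to r \in \RS$, $x_i\sigma = l_i\tau$, and $\eval{\sigma\dunion\tau}{m-1}{r}{v}$; since the $x_i\sigma$ are values and $\RS$ is a constructor TRS, $f(x_1\sigma,\dots,x_n\sigma) = f(l_1\tau,\dots,l_n\tau)$ is an innermost redex whose unique contractum is $r\tau = r(\sigma\dunion\tau)$, and by the induction hypothesis $r(\sigma\dunion\tau) \rssrew[m-1] v$, so $t\sigma \rssrew[m] v$. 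For the last (congruence) rule, $t = f(t_1,\dots,t_n)$, the $t_i\sigma$ reduce in $m_i$ steps to values $v_i$, and $f(x_1,\dots,x_n)(\sigma\dunion\rho)$ reduces in $m_0$ steps to $v$; here I would use that innermost rewriting contracts redexes in the arguments first, that these reductions are independent (the $t_i\sigma$ share no redex positions), and that by orthogonality/confluence the order is immaterial, so the total number of steps from $f(t_1\sigma,\dots,t_n\sigma)$ to $v$ is exactly $m_0 + \sum_{i=1}^n m_i = m$.

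For the converse I would induct on $\dheight(t\sigma,\rsrew) = m$ (equivalently on the length of the innermost derivation of $t\sigma$, which is finite and—by confluence together with the fact that all normal forms are values—independent of the chosen redex), again with a case analysis on the shape of $t$. If $t$ is a variable or a constructor-headed term $c(\vec x)$, then $t\sigma$ is already a value, $m=0$, and the matching axiom applies. If $t = f(t_1,\dots,t_n)$ with some $t_i$ not a variable, I would peel off the argument reductions: let $t_i\sigma \rssrew[m_i] v_i$ and apply the induction hypothesis to each $t_i$ (each has strictly smaller derivation height, being a proper subterm whose reduction is a strict sub-derivation) to get $\eval{\sigma}{m_i}{t_i}{v_i}$, then apply the induction hypothesis to $f(x_1,\dots,x_n)$ under $\sigma\dunion\rho$ (derivation height $m_0 = m - \sum m_i$) and assemble the congruence rule. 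Finally, if $t = f(x_1,\dots,x_n)$ with all arguments variables, then $t\sigma = f(x_1\sigma,\dots,x_n\sigma)$ is a redex (as $\RS$ is completely defined, $f$ applied to values is never a normal form), say matched by $f(l_1,\dots,l_n)\to r$ via $\tau$ with $x_i\sigma = l_i\tau$; its contractum $r\tau$ has derivation height $m-1$, the induction hypothesis gives $\eval{\sigma\dunion\tau}{m-1}{r}{v}$, and the rule-application rule yields $\eval{\sigma}{m}{f(\vec x)}{v}$.

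The main obstacle is the congruence step: one must justify carefully that the derivation height of $f(t_1\sigma,\dots,t_n\sigma)$ decomposes additively as $m_0 + \sum_{i} m_i$. This needs the observation that under innermost evaluation the arguments must be reduced to values before the head redex fires, that redexes occurring in distinct arguments do not interfere, and that by orthogonality the derivation height does not depend on the interleaving—so one may serialise the argument reductions and then perform the head reduction, obtaining exactly $m_0 + \sum_i m_i$ steps. A clean way to package this is to prove, as a preliminary lemma, that for values $v_1,\dots,v_n$ one has $\dheight(f(s_1,\dots,s_n),\rsrew) = \sum_i \dheight(s_i,\rsrew) + \dheight(f(v_1,\dots,v_n),\rsrew)$ whenever $s_i \rssrew v_i$; everything else is then a routine structural induction.
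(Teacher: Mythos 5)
Your proposal is correct and follows essentially the same route as the paper: the paper's (two-sentence) proof likewise strengthens the statement to arbitrary terms, proves the left-to-right direction by induction on the big-step derivation and the converse by induction on the length of the innermost rewrite sequence, relying—as you do—on the facts that $\RS$ is completely defined and non-overlapping so that the derivation height decomposes additively over the argument evaluations and the head step. Your explicit lemma for the congruence case is exactly the content the paper delegates to its remark that innermost derivations are determined modulo the order in which parallel redexes are contracted.
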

\begin{proof}
In proof of the direction from left to right, we show the stronger statement
that $\eval{\sigma}{m}{t}{v}$ implies $\dheight(t\sigma,\rsrew) = m$ by induction
on the size of the proof of the judgement $\eval{\sigma}{m}{f(x_1,\dots,x_n)}{v}$. 
For the opposite direction, we show that if $\dheight(t\sigma,\rsrew) = m$, then 
$\eval{\sigma}{m}{t}{v}$ by induction on the length of the derivation 
$D\colon t\sigma \rssrew v$.
\end{proof}

The next (technical) lemma follows by a straightforward inductive
argument.
\begin{lemma}
\label{l:4}
Let $t$ be a term, let $v$ be a value and
let $\sigma$ be a substitution. If $\eval{\sigma}{m}{t}{v}$
and if $\sigma'$ is an extension of $\sigma$, then $\eval{\sigma'}{m}{t}{v}$. 
Furthermore the sizes of the derivations of the corresponding judgements are the same.
\end{lemma}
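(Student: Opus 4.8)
The plan is to proceed by induction on the size of the derivation $\Pi$ of the judgement $\eval{\sigma}{m}{t}{v}$, with a case analysis on the last rule of $\Pi$. In each case I construct a derivation $\Pi'$ of $\eval{\sigma'}{m}{t}{v}$ of exactly the same size, so that both assertions of the lemma are discharged at once.

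For the two axioms it suffices to observe that the side conditions persist under extension: if $x\sigma = v$ then $x \in \dom(\sigma) \subseteq \dom(\sigma')$ and $x\sigma' = x\sigma = v$, and in the constructor case likewise $x_i\sigma' = x_i\sigma = v_i$. Hence the very same axiom derives $\eval{\sigma'}{0}{t}{v}$, and the size (a single node) is unchanged.

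For the rule that unfolds a defined symbol applied to variables, $\Pi$ ends in
\[
 \infer{\eval{\sigma}{m+1}{f(x_1,\dots,x_n)}{v}}{
   f(l_1,\dots,l_n)\to r\in\RS & \forall i\colon x_i\sigma = l_i\tau & \eval{\sigma\dunion\tau}{m}{r}{v}}
\]
Here I first invoke the variable convention to assume the rule $f(l_1,\dots,l_n)\to r$ has been renamed so that $\dom(\tau)\cap\dom(\sigma')=\varnothing$; since $r$ and the $l_i$ only mention variables from $\dom(\tau)$, this affects neither $m$ nor the size of the premise derivation. Then $x_i\sigma' = x_i\sigma = l_i\tau$ still holds, $\sigma'\dunion\tau$ is well defined and extends $\sigma\dunion\tau$, so the induction hypothesis gives $\eval{\sigma'\dunion\tau}{m}{r}{v}$ by a sub-derivation of the same size; reapplying the rule yields $\eval{\sigma'}{m+1}{f(x_1,\dots,x_n)}{v}$ with a derivation of the same total size. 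The last rule, for a defined symbol applied to arbitrary terms, is handled identically: choose the fresh variables $x_1,\dots,x_n$ outside $\dom(\sigma')$, note that $\sigma'\dunion\rho$ extends $\sigma\dunion\rho$ and $\sigma'$ extends $\sigma$, apply the induction hypothesis to each of the $n+1$ premises, and recombine.

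The only delicate point — the reason the lemma is labelled \emph{technical} rather than trivial — is the bookkeeping around the disjoint unions $\sigma\dunion\tau$ and $\sigma\dunion\rho$: enlarging $\sigma$ to $\sigma'$ could in principle clash with $\dom(\tau)$ or with the fresh $x_i$. This is dealt with, as above, by the standard convention that bound rule variables and freshly introduced variables may be chosen disjoint from any finite set of variables fixed in advance (here $\dom(\sigma')$); alternatively one establishes once and for all the auxiliary fact that renaming such variables leaves both the derived judgement and the derivation size unchanged. Everything else is routine propagation of the induction hypothesis through the premises.
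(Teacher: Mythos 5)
Your proof is correct and follows exactly the route the paper has in mind: the paper merely remarks that the lemma ``follows by a straightforward inductive argument,'' and your induction on the size of the derivation, with the same-shape reconstruction in each case, is precisely that argument spelled out. The extra care you take with the disjoint unions $\sigma\dunion\tau$ and $\sigma\dunion\rho$ (renaming rule variables and choosing the fresh $x_i$ outside $\dom(\sigma')$) is a legitimate and welcome filling-in of a detail the paper leaves implicit.
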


\section{Annotated Types}
\label{AnnotatedTypes}

Let $S$ be a set of types. We call a type $A \in S$ \emph{annotated}, if $A$
is decorated with resource annotation. These annotations will allow us to read off 
the potential of a well-typed term $t$ from the annotations.

\begin{definition}
\label{d:simpletypes}
Let $S$ be a set of types.
An \emph{annotated type} $\atyp{A}{\vec{p}}$, is a pair
consisting of a type $A \in S$ and a vector $\vec{p}=(p_1,\dots,p_k)$
over non-negative rational numbers, typically natural numbers. The vector $\vec{p}$ is called \emph{resource annotation}.
\end{definition}

Resource annotations are denoted by $\vec{p}$, $\vec{q}$, $\vec{u}$, $\vec{v}$, \dots, 
possibly extended by subscripts and we write $\Vecs$ for the set of such annotations. 
For resource annotations $(p)$ of length $1$ we write $p$. The empty annotation $()$ is written $0$. 
We will see that a resource annotation does not change its meaning if zeroes are appended at the end, 
so, conceptually, we can identify $()$ with $(0)$. If $\vec p=(p_1,\dots,p_k)$ 
we write $k=\len{\vec{p}}$ and $\max\vec{p}=\max_i p_i$. 
We define the notations  $\vec p\leqslant \vec q$ and $\vec p+\vec q$ and $\lambda \vec p$ for 
$\lambda \geqslant 0$ component-wise, filling up with $0$s if needed. 
So, for example $(1,2)\leqslant (3,4,5)$ and $(1,2)+(3,4,5)=(4,6,5)$. 
Furthermore, we recall the additive shift \cite{HoffmannH10a} given by 
\begin{equation*}
 \shift(\vec{p}) \defsym (p_1 + p_2, p_2 + p_3, \dots, p_{k-1} + p_k, p_k)
 \tpkt
\end{equation*}
We also define the interleaving $\vec p\interleave \vec q$ by $(p_1,q_1,p_2,q_2,$ $\dots,p_k,q_k)$ 
where, as before the shorter of the two vectors is padded with $0$s. 
Finally, we use the notation $\Diamond\vec{p} = p_1$ for the first entry of an annotation vector. 

If no confusion can arise, we refer to annotated types simply as types. 
In contrast to Hoffmann et al.~\cite{HoffmannH10a,Hoffmann:2011}, we generalise the
concept of annotated types to arbitrary (data) types. In~\cite{HoffmannH10a} only
list types, in~\cite{Hoffmann:2011} list and tree types have been annotated. 

\begin{definition}
Let $\FS$ be a signature.
Suppose $\FS(f) =\typdcl{A_1 \times \cdots \times A_n}{C}$, such
that the $A_i$ ($i=1,\dots,n$) and $C$ are types. Consider the annotated
types $A_i^{\vec{u_i}}$ and $\atyp{A}{\vec{v}}$. Then 
an \emph{annotated type declaration} for $f$ is a type declaration over
annotated types, decorated with a number $p$:
\begin{equation*}
  \atypdcl{A_1^{\vec{u_1}} \times \cdots \times A_n^{\vec{u_n}}}{\atyp{C}{\vec{v}}}{p}
    \tpkt
\end{equation*}
The set of annotated type declarations is denoted as $\TDannot$.
\end{definition}

We write $A^0$ instead of $\atypdcl{}{\atyp{A}{0}}{0}$.
We lift signatures to \emph{annotated signatures} 
$\FS \colon \CS \cup \DS \to (\pow(\TDannot)\setminus \varnothing)$ 
by mapping a function symbol to a non-empty set of annotated type 
declarations. Hence for any $f \in \CS \cup \DS$ we allow multiple types. 
If $f$ has result type $C$, then for each
annotation $C^{\vec q}$ there should exist exactly one declaration of
the form 
$\atypdcl{A_1^{\vec{p_1}} \times \cdots \times A_n^{\vec{p_n}}}{C^{\vec q}}{p}$ in $\FS(f)$.
Moreover, constructor annotations are to satisfy the
\emph{superposition principle}: If a constructor $c$ admits the
annotations 
$\atypdcl{A_1^{\vec{p_1}} \times \cdots \times A_n^{\vec{p_n}}}{C^{\vec{q}}}{p}$ and 
$\atypdcl{A_1^{\vec{p'_1}} \times \cdots \times A_n^{\vec{p'_n}}}{C^{\vec{q'}}}{p'}$ 
then it also has the annotations 
$\atypdcl{A_1^{\lambda \vec{p_1}} \times \cdots \times A_n^{\lambda \vec{p_n}}}{C^{\lambda \vec{q}}}{\lambda p}$ ($\lambda\geqslant 0$) and
$\atypdcl{A_1^{\vec{p_1}+\vec{p'_1}} \times \cdots \times A_n^{\vec{p_n}+\vec{p'_n}}}{C^{\vec{q}+\vec{q'}}}{p+p'}$.

Note that, in view of superposition and uniqueness, the annotations of a given constructor are 
uniquely determined once we fix the annotated types 
for result annotations of the form $(0,\dots,0,1)$ 
(remember the implicit filling up with $0$s). 
An annotated signature $\FS$ is simply called signature, where we sometimes
write $\typed{f}{\atypdcl{A_1 \times \cdots \times A_n}{C}{p}}$ instead of $\atypdcl{A_1 \times \cdots \times A_n}{C}{p} \in \FS(f)$. 

\begin{example}[continued from Example~\ref{ex:2}]
\label{ex:3}
In order to extend  $\FS$ to an annotated signature we can set 
\begin{alignat*}{3}
  \FS(\zero) &\defsym \{\aNat{\vec p}\mid \vec p\in \Vecs\} & \hspace{.5ex}
  \FS(\mS) &\defsym \{\atypdcl{\aNat{\shift(\vec p)}}{\aNat{\vec p}}{\Diamond\vec p}\mid \vec p\in \Vecs\}
  \\
  \FS(\nil) &\defsym \{ \aList{\vec p} \mid \vec p\in \Vecs\} & \hspace{.5ex}
  \FS(\cons) &\defsym \{\atypdcl{\aNat{0} \times \aList{\shift(\vec p)}}{\aList{\vec p}}{\Diamond\vec p} \mid \vec p\in \Vecs\}
  \\
  \FS(\queue) &\defsym \{\atypdcl{\aList{\vec p} \times \aList{\vec q}}{\aQueue{\vec p\interleave\vec q}}{0} \mid \vec p,\vec q\in \Vecs\}
\end{alignat*}
In particular, we have the typings 
$\cons : \atypdcl{\aNat{0} \times \aList{7}}{\aList{7}}{7}$ and 
$\cons : \atypdcl{\aNat{0} \times \aList{(10,7)}}{\aList{(3,7)}}{3}$ and 
$\queue : \atypdcl{\aList{1} \times \aList{3}}{\aQueue{(1,3)}}{0}$. 

We omit annotations for the defined symbols and refer to Example~\ref{ex:5} for a complete signature with different constructor annotations. 
\end{example}

The next definition introduces the notion of the potential
of a value. 

\begin{definition}
\label{d:potential}
Let $v = c(v_1,\dots,v_n) \in \TA(\CS)$ and let 
$\atypdcl{A_1 \times \cdots \times A_n}{C}{p} \in \FS(c)$.
Then the \emph{potential} of $v$ is defined inductively as
\begin{equation*}
  \Phi(\typed{v}{C}) \defsym p + \Phi(\typed{v_1}{A_1}) + \cdots + \Phi(\typed{v_n}{A_n})
  \tpkt
\end{equation*}
\end{definition}

Note that by assumption the declaration in $\FS(c)$ is unique. 

\begin{example}[continued from Example~\ref{ex:3}]
\label{ex:4}
It is easy to see that for any term $t$ of type $\aNat{0}$, 
we have $\Phi(\typed{t}{\aNat{0}}) = 0$ and $\Phi(\typed{t}{\aNat{\lambda}})=\lambda t$. 

If $l$ is a list then 
$\Phi(\typed{l}{\aList{(p,q)}}) = p \cdot \len{l} + q \cdot \binom{\len{l}}{2}$.
where $\len{l}$ denotes the length of $l$, that is the number of $\cons$ in $l$.
Let $\len{l} = \ell$. We proceed by induction on $\ell$. Let $\ell = 0$. Then
$\Phi(\typed{\nil}{\aList{(p,q)}}) = 0$ as required. Suppose $\ell = \ell' + 1$:
\begin{align*}
  \Phi(\typed{n \cons l'}{\aList{(p,q)}}) &= p + \Phi(\typed{n}{\aNat{0}}) + 
  \Phi(\typed{l'}{\aList{(p+q,q)}}) 
  \\
  &= p + (p+q) \cdot \ell' + q \cdot \binom{\ell'}{2} 
  \\
  &= p \cdot \ell + q \cdot \left[ \binom{\ell'}{1} + \binom{\ell'}{2} \right]
  = p \cdot \ell + q \cdot \binom{\ell}{2} \tpkt
\end{align*} 
More generally, we have $\Phi(\typed{l}{\aList{\vec p}})=\sum_i p_i\binom{\len{l}}{i}$. 
Finally, if $\queue(l,k)$ has type $\Queue$ then 
$\Phi(\typed{\queue(l,k)}{\aQueue{\vec{p} \interleave \vec{q}}})=\Phi(\typed{l}{\aList{\vec p}})+ \Phi(\typed{k}{\aList{\vec q}})$. 
\end{example}

The \emph{sharing relation} 
$\share{A^{\vec{p}}}{A_1^{\vec{p_1}},A_2^{\vec{p_2}}}$ holds if $A=A_1=A_2$ and 
$\vec{p_1} + \vec{p_2} = \vec p$.
The subtype relation is defined  as follows: 
$\atyp{A}{\vec{p}} \subtype \atyp{B}{\vec{q}}$, if $A = B$ and
$\vec{p} \geqslant \vec{q}$.

\begin{lemma}
\label{l:10}
If $\share{A^{\vec{p}}}{A_1^{\vec{p_1}},A_2^{\vec{p_2}}}$ then 
$\Phi(\typed{v}{A^{\vec{p}}}) = \Phi(\typed{v}{A_1^{\vec{p_1}}}) + \Phi(\typed{v}{A_2^{\vec{p_2}}})$ 
holds for any value of type $A$. If $\atyp{A}{\vec{p}} \subtype \atyp{B}{\vec{q}}$ then 
$\Phi(\typed{v}{\atyp{A}{\vec p}})\geqslant \Phi(\typed{v}{\atyp{B}{\vec q}})$ again for any $v:A$. 
\end{lemma}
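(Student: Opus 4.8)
The plan is to prove both statements by induction on the structure of the value $v$, since the potential $\Phi$ is defined inductively over constructor terms. The key observation driving the argument is the \emph{superposition principle} for constructor annotations: whenever a constructor $c$ admits annotated declarations with result annotations $\vec q$ and $\vec q'$, it also admits one with result annotation $\vec q + \vec q'$ (and $\lambda \vec q$ for $\lambda \geqslant 0$), and the argument annotations and the cost $p$ add (resp. scale) in lockstep. Together with the uniqueness clause — for each result annotation there is exactly one declaration — this is precisely what lets the two copies in the sharing relation (or the weakening in the subtype relation) be tracked componentwise through the recursion.

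For the sharing statement: write $v = c(v_1,\dots,v_n)$ and let $\atypdcl{A_1^{\vec{r_1}} \times \cdots \times A_n^{\vec{r_n}}}{A^{\vec p}}{p}$, $\atypdcl{A_1^{\vec{r_1'}} \times \cdots \times A_n^{\vec{r_n'}}}{A^{\vec{p_1}}}{q}$, and the analogous one for $A^{\vec{p_2}}$ be the unique declarations in $\FS(c)$ for result annotations $\vec p$, $\vec{p_1}$, $\vec{p_2}$ respectively. Since $\vec{p_1} + \vec{p_2} = \vec p$, superposition produces a declaration with result annotation $\vec{p_1} + \vec{p_2} = \vec p$, argument annotations $\vec{r_i'} + \vec{r_i''}$ and cost $q + q'$; by uniqueness this must coincide with the first declaration, so $p = q + q'$ and $\vec{r_i} = \vec{r_i'} + \vec{r_i''}$ for each $i$. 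Hence $\share{A_i^{\vec{r_i}}}{A_i^{\vec{r_i'}}, A_i^{\vec{r_i''}}}$, and the induction hypothesis gives $\Phi(\typed{v_i}{A_i^{\vec{r_i}}}) = \Phi(\typed{v_i}{A_i^{\vec{r_i'}}}) + \Phi(\typed{v_i}{A_i^{\vec{r_i''}}})$. Summing over $i$ and adding $p = q + q'$, the defining equation for $\Phi$ yields $\Phi(\typed{v}{A^{\vec p}}) = \Phi(\typed{v}{A^{\vec{p_1}}}) + \Phi(\typed{v}{A^{\vec{p_2}}})$. The base case is a constructor of arity $0$, where the claim reduces to $p = p_1 + p_2$, which is again forced by superposition plus uniqueness.

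For the subtype statement: with $\atyp{A}{\vec p} \subtype \atyp{B}{\vec q}$ we have $A = B$ and $\vec p \geqslant \vec q$. Set $\vec p = \vec q + \vec d$ with $\vec d \geqslant 0$. Using the sharing lemma just proved (for $\share{A^{\vec p}}{A^{\vec q}, A^{\vec d}}$, which holds since $\vec q + \vec d = \vec p$), we get $\Phi(\typed{v}{A^{\vec p}}) = \Phi(\typed{v}{A^{\vec q}}) + \Phi(\typed{v}{A^{\vec d}})$, and it remains to observe that $\Phi(\typed{v}{A^{\vec d}}) \geqslant 0$, which is immediate from the definition of $\Phi$ since all annotations are non-negative rationals and the recursion only ever adds such quantities. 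This also makes the second claim a clean corollary of the first rather than a separate induction.

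The main obstacle I anticipate is bookkeeping around the padding-with-zeros convention: the declarations for $\vec{p_1}$, $\vec{p_2}$, $\vec p$ may nominally have resource annotations of different lengths, so I need to be careful that "componentwise sum" and the superposition principle are applied with the implicit zero-padding, and that the induced argument annotations $\vec{r_i'}, \vec{r_i''}$ genuinely add up under that convention. The remark after Definition of the annotated signature — that constructor annotations are uniquely determined once fixed on unit vectors $(0,\dots,0,1)$ — is the right tool here, and I would invoke it to normalise all three declarations to a common length before comparing. Everything else is routine unfolding of Definition~\ref{d:potential}.
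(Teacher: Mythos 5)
Your proposal is correct and takes essentially the same route as the paper: an induction on the structure of $v$ in which superposition together with uniqueness forces the argument annotations and costs of the relevant constructor declarations to add componentwise, followed by deriving the subtyping claim from the sharing claim plus nonnegativity of potentials. Your version merely spells out in more detail what the paper's proof sketches (including the zero-padding bookkeeping), so there is nothing to correct.
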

\begin{proof}
The proof of the first claim is by induction on the structure of $v$. 
We note that by superposition together with uniqueness the additivity property propagates 
to the argument types. For example, if we have the annotations 
$\mS:\atypdcl{\aNat{2}}{\aNat{3}}{4}$ and 
$\mS:\atypdcl{\aNat{4}}{\aNat{5}}{6}$ and $\mS:\atypdcl{\aNat{x}}{\aNat{y}}{10}$ 
then we can conclude $x=6$, $y=8$, for this annotation must be present by superposition 
and there can only be one by uniqueness. 

The second claim follows from the first one and nonnegativity of potentials. 
\end{proof}

\begin{figure}[t]
  \centering
  \begin{tabular}{c}
    $\infer{\tjudge{\typed{x_1}{A_1^{\vec{u_1}}},\dots,\typed{x_n}{A_n^{\vec{u_n}}}}{p}{\typed{f(x_1,\dots,x_n)}{\atyp{C}{\vec{v}}}}}{%
      f \in \CS \cup \DS
      &
      \atypdcl{A_1^{\vec{u_1}} \times \cdots \times A_n^{\vec{u_n}}}{\atyp{C}{\vec{v}}}{p} \in \FS(f)
    }$
    \qquad \hfill \qquad 
    $\infer{\tjudge{\Gamma}{p'}{\typed{t}{C}}}{%
        \tjudge{\Gamma}{p}{\typed{t}{C}}
        &
        p' \geqslant p
        }$
    \\[2.5ex]
    $\infer{\tjudge{\Gamma_1,\dots,\Gamma_n}{p}{\typed{f(t_1,\dots,t_n)}{C}}}{%
      \begin{minipage}[b]{40ex}
        all $x_i$ are fresh\\[.5ex]
        $\tjudge{\typed{x_1}{A_1},\dots,\typed{x_n}{A_n}}{p_0}{\typed{f(x_1,\dots,x_n)}{C}}$
      \end{minipage}
      &
      \begin{minipage}[b]{32ex}
        $p = \sum_{i=0}^n p_i$\\[.5ex]
        $\tjudge{\Gamma_1}{p_1}{\typed{t_1}{A_1}} \ \cdots \
        \tjudge{\Gamma_n}{p_n}{\typed{t_n}{A_n}}$
      \end{minipage}
      }$%
    \\[2.5ex] 
    $\infer{\tjudge{\Gamma, \typed{x}{A}}{p}{\typed{t}{C}}}{%
      \tjudge{\Gamma}{p}{\typed{t}{C}}
      }$
    \qquad \hfill \qquad 
    $\infer{\tjudge{\Gamma, \typed{z}{A}}{p}{\typed{t[z,z]}{C}}}{%
      \tjudge{\Gamma, \typed{x}{A_1}, \typed{y}{A_2}}{p}{\typed{t[x,y]}{C}}%
      & 
      \share{A}{A_1,A_2}
      &
      \text{$x$, $y$ are fresh}
      }$
    \\[2.5ex]
    $\infer{\tjudge{\Gamma, \typed{x}{A}}{p}{\typed{t}{C}}}{%
      \tjudge{\Gamma, \typed{x}{B}}{p}{\typed{t}{C}}
      &
      A \subtype B
      }$
    \hfill 
    $\infer{\tjudge{\typed{x}{A}}{0}{\typed{x}{A}}}{}$
    \hfill
    $\infer{\tjudge{\Gamma}{p}{\typed{t}{C}}}{%
      \tjudge{\Gamma}{p}{\typed{t}{D}}
      &
      D \subtype C
      }$      
  \end{tabular}
  \caption{Type System for Rewrite Systems}
  \label{fig:2}
\end{figure}

The set of typing rules for TRSs are given in Figure~\ref{fig:2}. 
Observe that the type system employs the assumption that $\RS$ is
left-linear. In a nutshell, the method works as follows:
Let $\Gamma$ be a typing context and let us consider the typing judgement 
$\tjudge{\Gamma}{p}{\typed{t}{A}}$ derivable from the type rules. 
Then $p$ is an upper-bound to the amortised cost 
required for reducing $t$ to a value. 
The derivation height of $t\sigma$ (with respect to innermost rewriting) is
bound by the difference in the potential before and after the
evaluation plus $p$. Thus if the sum of the potential of the arguments of
$t\sigma$ is in $\bigO(n^k)$, where $n$ is the size of the arguments, then
the runtime complexity of $\RS$ lies in $\bigO(n^k)$.

Recall that any rewrite rule $l \to r \in \RS$ 
can be written as $f(l_1,\dots,l_n) \to r$ with $l_i \in \TA(\CS,\VS)$. 
We introduce \emph{well-typed} TRSs.

\begin{definition}
\label{d:welltyped}
Let $f(l_1,\dots,l_n) \to r$ be a rewrite rule in $\RS$
and let $\Var(f(\vec{l})) = \{y_1,\dots,y_\ell\}$.
Then $f \in \DS$ is \emph{well-typed} wrt.~$\FS$, if we obtain
\begin{equation}
\label{eq:welltyped}
\tjudge{\typed{y_{1}}{B_{1}},\dots,\typed{y_{\ell}}{B_{\ell}}}{p - 1 + \sum_{i=1}^n k_i}{\typed{r}{C}}
  \tkom
\end{equation}
for all $\atypdcl{A_1 \times \cdots \times A_n}{C}{p} \in \FS(f)$,
for all types $B_{j}$ ($j \in \{1,\dots,\ell\}$), 
and all costs $k_i$, such that
$\tjudge{\typed{y_{1}}{B_{1}},\dots,\typed{y_{\ell}}{B_{\ell}}}{k_i}{\typed{l_i}{A_i}}$
is derivable. A TRS $\RS$ over $\FS$ is \emph{well-typed} if any defined $f$ is well-typed.
\end{definition}

Contrary to analogous definitions in the literature on amortised resource analysis 
the definition recurs to the type system in order to specify the available resources 
in the type judgement~\eqref{eq:welltyped}. This is necessary to adapt amortised
analysis to rewrite systems.

Let $\Gamma$ be a typing context and let $\sigma$ be a substitution. We call
$\sigma$ \emph{well-typed (with respect to $\Gamma$)} if for all $x \in \dom(\Gamma)$
$x\sigma$ is of type $\Gamma(x)$. 
We extend Definition~\ref{d:potential} to substitutions $\sigma$ 
and typing contexts $\Gamma$. Suppose $\sigma$ is well-typed 
with respect to $\Gamma$. Then $\Phi(\typed{\sigma}{\Gamma}) \defsym \sum_{x \in \dom(\Gamma)} 
\Phi(\typed{x\sigma}{\Gamma(x)})$.
We state and prove our first soundness result.

\begin{theorem}
\label{l:2}
Let $\RS$ and $\sigma$ be well-typed. Suppose $\tjudge{\Gamma}{p}{\typed{t}{A}}$
and $\eval{\sigma}{m}{t}{v}$. Then 
$\Phi(\typed{\sigma}{\Gamma}) - \Phi(\typed{v}{A}) + p  \geqslant m$.
\end{theorem}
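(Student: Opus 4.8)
The plan is to proceed by induction on the size of the derivation of the big-step judgement $\eval{\sigma}{m}{t}{v}$, strengthening the statement so that it quantifies over \emph{all} typing contexts $\Gamma$, all types $A$, all costs $p$, and all well-typed $\sigma$ simultaneously (this is needed because the induction hypothesis must be applied to subterms with different annotations). The induction is then organised by case analysis on the last rule applied in the typing derivation $\tjudge{\Gamma}{p}{\typed{t}{A}}$, with an inner case analysis on the last big-step rule. First I would dispatch the structural typing rules that do not touch the term: weakening on the cost ($p' \geqslant p$), weakening on the context, the subtyping rules on the context and on the result type, and the sharing (contraction) rule. In each of these the term and the big-step derivation are unchanged, so the induction hypothesis applies directly; for the subtyping and sharing cases one additionally invokes Lemma~\ref{l:10} to compare $\Phi(\typed{v}{A})$ resp.\ $\Phi(\typed{\sigma}{\Gamma})$ across the subtype/share, and the desired inequality follows because potentials only move in the favourable direction (a larger $\Phi(\typed{\sigma}{\Gamma})$ or a smaller $\Phi(\typed{v}{A})$).

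Next come the base cases. For a variable $t = x$, the only applicable typing rule is $\tjudge{\typed{x}{A}}{0}{\typed{x}{A}}$ (possibly composed with structural rules already handled), the only big-step rule gives $m = 0$ and $v = x\sigma$, and the claim reduces to $\Phi(\typed{x\sigma}{A}) \geqslant \Phi(\typed{x\sigma}{A})$. For a constructor application $t = c(x_1,\dots,x_n)$ typed by the axiom with $\atypdcl{A_1^{\vec{u_1}} \times \cdots \times A_n^{\vec{u_n}}}{\atyp{C}{\vec v}}{p} \in \FS(c)$, the big-step rule gives $m = 0$ and $v = c(x_1\sigma,\dots,x_n\sigma)$, and by Definition~\ref{d:potential} we have exactly $\Phi(\typed{v}{C}) = p + \sum_i \Phi(\typed{x_i\sigma}{A_i^{\vec{u_i}}}) = p + \Phi(\typed{\sigma}{\Gamma})$, so $\Phi(\typed{\sigma}{\Gamma}) - \Phi(\typed{v}{C}) + p = 0 = m$, with equality — this is the place the bookkeeping of $p$ in annotated type declarations pays off.

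The two interesting inductive cases are the two rules for a defined/constructor application to general terms. Consider first $t = f(x_1,\dots,x_n)$ with $f \in \DS$, typed by the axiom with declaration $\atypdcl{A_1^{\vec{u_1}} \times \cdots}{C^{\vec v}}{p} \in \FS(f)$; the big-step derivation is the $m+1$ rule, so there is a rule $f(l_1,\dots,l_n) \to r \in \RS$, a substitution $\tau$ with $x_i\sigma = l_i\tau$, and $\eval{\sigma \dunion \tau}{m}{r}{v}$. Here I invoke well-typedness of $\RS$: for the declaration chosen, and for the types $B_j$ of the $y_j \in \Var(f(\vec l))$ under $\tau$, there are costs $k_i$ with $\tjudge{\vec y : \vec B}{k_i}{\typed{l_i}{A_i^{\vec{u_i}}}}$ derivable and $\tjudge{\vec y : \vec B}{p - 1 + \sum_i k_i}{\typed{r}{C^{\vec v}}}$ derivable. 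Since $\sigma \dunion \tau$ restricted appropriately is well-typed against $\vec y : \vec B$, the induction hypothesis applied to $r$ gives $\Phi(\typed{\tau}{\vec y:\vec B}) - \Phi(\typed{v}{C^{\vec v}}) + (p - 1 + \sum_i k_i) \geqslant m$. It then remains to relate $\Phi(\typed{\tau}{\vec y:\vec B})$ to $\Phi(\typed{\sigma}{\Gamma}) = \sum_i \Phi(\typed{x_i\sigma}{A_i^{\vec{u_i}}})$: applying Theorem~\ref{l:2} itself (or rather the part of the strengthened statement already proven — this requires care that the derivations invoked are genuinely smaller, which holds since the $l_i$ are constructor terms evaluated in zero steps) to the judgements $\tjudge{\vec y:\vec B}{k_i}{\typed{l_i}{A_i^{\vec{u_i}}}}$ together with $\eval{\tau}{0}{l_i}{x_i\sigma}$ yields $\Phi(\typed{\tau}{\vec y:\vec B|_{\Var(l_i)}}) + k_i \geqslant \Phi(\typed{x_i\sigma}{A_i^{\vec{u_i}}})$; summing and using that each $y_j$ occurs in the $l_i$ (left-linearity, so no double counting — actually it may occur in several $l_i$, which only helps) gives $\Phi(\typed{\tau}{\vec y:\vec B}) + \sum_i k_i \geqslant \Phi(\typed{\sigma}{\Gamma})$, equivalently $\Phi(\typed{\sigma}{\Gamma}) - \sum_i k_i \leqslant \Phi(\typed{\tau}{\vec y:\vec B})$. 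Substituting into the induction hypothesis, $\Phi(\typed{\sigma}{\Gamma}) - \Phi(\typed{v}{C^{\vec v}}) + p - 1 \geqslant m$, i.e.\ $\Phi(\typed{\sigma}{\Gamma}) - \Phi(\typed{v}{C^{\vec v}}) + p \geqslant m + 1$, as required. Finally, for $t = f(t_1,\dots,t_n)$ with general arguments, typed by the composition rule with context split $\Gamma = \Gamma_1,\dots,\Gamma_n$ and cost split $p = \sum_{i=0}^n p_i$, and the matching big-step rule with $m = \sum_{i=0}^n m_i$: the induction hypothesis on each $t_i$ gives $\Phi(\typed{\sigma|_{\Gamma_i}}{\Gamma_i}) - \Phi(\typed{v_i}{A_i}) + p_i \geqslant m_i$, and on $f(x_1,\dots,x_n)$ under $\rho = \{x_i \mapsto v_i\}$ (well-typed against $\vec x:\vec A$ because $\eval{\sigma}{m_i}{t_i}{v_i}$ and soundness of evaluation preserve types — one needs a small lemma, or it is folklore here, that the $v_i$ indeed have the declared types) gives $\Phi(\typed{\rho}{\vec x:\vec A}) - \Phi(\typed{v}{C}) + p_0 \geqslant m_0$, where $\Phi(\typed{\rho}{\vec x:\vec A}) = \sum_i \Phi(\typed{v_i}{A_i})$; adding all $n+1$ inequalities the intermediate $\Phi(\typed{v_i}{A_i})$ telescope away and one is left with $\Phi(\typed{\sigma}{\Gamma}) - \Phi(\typed{v}{C}) + p \geqslant m$.

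The main obstacle is the defined-symbol case: making the appeal to well-typedness fit the induction precisely. Two points require vigilance. First, in Definition~\ref{d:welltyped} the costs $k_i$ and the argument types $B_j$ are existentially available for \emph{some} derivable typing of the $l_i$, and one must argue that the particular value substitution $\tau$ arising from the big-step rule does admit such a typing with $\Phi(\typed{\tau}{\vec y:\vec B})$ controlled — this is where Theorem~\ref{l:2} is used self-referentially on the (zero-step) evaluations $\eval{\tau}{0}{l_i}{x_i\sigma}$, so the induction must be set up on a measure (e.g.\ size of the big-step derivation) that strictly decreases there; since the $l_i$ are constructor terms their evaluations have size strictly smaller than that of the $\eval{\sigma}{m+1}{f(\vec x)}{v}$ derivation, so this is sound but must be stated explicitly. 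Second, a variable $y_j$ of the rule may occur in several of the $l_i$; when summing the per-argument inequalities one is over-counting $\Phi(\typed{\tau}{\ldots})$ on the left, which is harmless since potentials are nonnegative and the inequality one wants points in the direction that over-counting helps — still worth a sentence. All remaining manipulations are the routine telescoping already indicated.
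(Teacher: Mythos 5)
Your overall architecture (induction on the big-step derivation, side case analysis on the typing derivation, the handling of the structural rules, the base cases, and the telescoping in the composition case) coincides with the paper's proof. The genuine gap is in the defined-symbol case, and it is a direction-of-inequality error. Applying the theorem being proven to the pattern typings $\tjudge{\vec{y}\colon\vec{B}}{k_i}{\typed{l_i}{A_i}}$ together with the zero-cost evaluations $\eval{\tau}{0}{l_i}{x_i\sigma}$ yields $\Phi(\typed{\tau}{\cdot}) + k_i \geqslant \Phi(\typed{x_i\sigma}{A_i})$, hence after summing only the \emph{upper} bound $\Phi(\typed{\sigma}{\Gamma}) \leqslant \Phi(\typed{\tau}{\Delta}) + \sum_i k_i$. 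To pass from the induction hypothesis on the body, $\Phi(\typed{\tau}{\Delta}) - \Phi(\typed{v}{C}) + p - 1 + \sum_i k_i \geqslant m$, to the goal $\Phi(\typed{\sigma}{\Gamma}) - \Phi(\typed{v}{C}) + p \geqslant m+1$, you need the \emph{lower} bound $\Phi(\typed{\sigma}{\Gamma}) \geqslant \Phi(\typed{\tau}{\Delta}) + \sum_i k_i$; from $a \geqslant m$ and $b \leqslant a$ one cannot conclude $b \geqslant m$, and your ``substituting into the induction hypothesis'' is exactly this invalid step. Moreover the needed direction cannot hold for an \emph{arbitrary} derivable typing of the $l_i$ (cost weakening lets $k_i$ grow without bound), so no soundness-style appeal can deliver it. What the paper does instead is purely structural: choose the canonical typings of the constructor patterns read off from the constructor declarations, for which $\Phi(\typed{l_i\tau}{A_i}) = k_i + \Phi(\typed{y_{i1}\tau}{B_{i1}}) + \cdots + \Phi(\typed{y_{il_i}\tau}{B_{il_i}})$ holds with \emph{equality} by Definition~\ref{d:potential} and induction on the pattern (the paper's ``inspection on the case for the constructors''); since Definition~\ref{d:welltyped} quantifies over \emph{all} derivable typings of the $l_i$, the body judgement with budget $p-1+\sum_i k_i$ is in particular available for this canonical choice, and left-linearity makes the variable sets $\Var(l_i)$ disjoint so the sum is exact. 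The main induction hypothesis on $\eval{\sigma \dunion \tau}{m}{r}{v}$ then closes the case.

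A secondary flaw in the same case: even if the direction were right, the self-application of the theorem to $\eval{\tau}{0}{l_i}{x_i\sigma}$ is not licensed by an induction on the size of the big-step derivation, because these evaluations are built afresh and are not subderivations of $\eval{\sigma}{m+1}{f(x_1,\dots,x_n)}{v}$; their size grows with the depth of $l_i$, so for a rule such as $f(c(d(y))) \to y$ the pattern evaluation is strictly larger than the body evaluation. One could repair the measure (say, lexicographically by cost and then size), but the paper sidesteps the issue entirely because the pattern--potential identity is a structural fact about constructor terms and needs no appeal to the theorem at all.
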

\begin{proof}
Let $\Pi$ be the proof deriving $\eval{\sigma}{m}{t}{v}$
and let $\Xi$ be the proof of $\tjudge{\Gamma}{p}{\typed{t}{A}}$. 
The proof of the theorem proceeds by main-induction on the length of $\Pi$
and by side-induction on the length of $\Xi$.
\begin{enumerate}
\item Suppose $\Pi$ has the form 
\begin{equation*}
    \infer{\eval{\sigma}{m}{x}{v}}{x\sigma = v}
    \tkom
\end{equation*}
such that $t = x$ and $v = x\sigma$. Wlog.\ $\Xi$ is of form
$\tjudge{\typed{x}{A}}{0}{\typed{x}{A}}$. Then 
$\Phi(\typed{\sigma}{\Gamma}) = \Phi(\typed{x\sigma}{A})$ and the
theorem follows.

\item Suppose $\Pi$ has the form 
\begin{equation*}
    \infer{\eval{\sigma}{m}{c(x_1,\dots,x_n)}{c(v_1,\dots,v_n)}}{%
       c \in \CS
       &
       x_1\sigma = v_1 
       & \cdots
       & x_n\sigma = v_n%
     }
    \tkom
\end{equation*}
such that $t = c(x_1,\dots,x_n)$ and $v = c(v_1,\dots,v_n)$. Further wlog.\ we suppose 
that $\Xi$ ends in the following judgement:
\begin{equation*}
    \tjudge{\typed{x_1}{A_1^{\vec{u_1}}},\dots,\typed{x_n}{A_n^{\vec{u_n}}}}{p}{\typed{c(x_1,\dots,x_n)}{\atyp{C}{\vec{w}}}}
    \tpkt
\end{equation*}
Then we have 
$\atypdcl{A_1^{\vec{u_1}} \times \cdots \times A_n^{\vec{u_n}}}{\atyp{C}{\vec{w}}}{p} \in \FS(c)$
and thus:
\begin{equation*}
  \Phi(\typed{\sigma}{\Gamma}) + p = 
  p  + \sum_{i=1}^n \Phi(\typed{x_i\sigma}{A_i^{\vec{u_i}}}) 
  = p + \sum_{i=1}^n \Phi(\typed{v_i}{A_i^{\vec{u_i}}}) 
  = \Phi(\typed{c(v_1,\dots,v_n)}{C^{\vec{w}}}) 
  \tkom
\end{equation*}
from which the theorem follows.

\item Suppose $\Pi$ ends in the following rule:
  \begin{equation*}
    \infer{\eval{\sigma}{m+1}{f(x_1,\dots,x_n)}{v}}{%
      \exists \ f(l_1,\dots,l_n) \to r \in \RS
       & \exists \tau \ \forall i\colon x_i\sigma = l_i\tau
       & \eval{\sigma \dunion \tau}{m}{r}{v}
     }
     \tpkt
  \end{equation*}
Then $t=f(x_1,\dots,x_n)$ and $f(x_1,\dots,x_n)\sigma = f(l_1,\dots,l_n)\tau$.
Suppose $\Var(f(\vec{l})) = \{y_{1},\dots,y_{\ell}\}$ and let
$\Var(l_i) = \{y_{i1},\dots,y_{il_i}\}$ for $i \in \{1,\dots,n\}$. 
As $\RS$ is left-linear we have $\Var(f(l_1,\dots,l_n)) = \biguplus_{i=1}^n \Var(l_i)$.
We set  $\Gamma = \typed{x_1}{A_1},\dots,\typed{x_n}{A_n}$.
By the assumption $\tjudge{\Gamma}{p}{\typed{t}{A}}$ and 
well-typedness of $\RS$ we obtain 
\begin{equation*}
  \tjudge{\overbrace{\typed{y_1}{B_1}, \dots, \typed{y_{\ell}}{B_\ell}}^{{} =: \Delta}}{p - 1 + \sum_{i=1}^n k_i}{\typed{r}{C}}
    \tkom
\end{equation*}
as in~\eqref{eq:welltyped}. By main induction hypothesis together with the
above equation, we have 
$\Phi(\typed{\sigma \dunion \tau}{\Delta}) - \Phi(\typed{v}{C}) + 
  p - 1 + \sum_{i=1}^n k_i  \geqslant m$.
Furthermore, we have 
\begin{align*}
  \Phi(\typed{\sigma}{\Gamma}) 
  & = \sum_{i=1}^n \Phi(\typed{x_i\sigma}{A_i}) 
  = \sum_{i=1}^n \left( k_i + \Phi(\typed{y_{i1}\tau}{B_{i1}}) + \cdots + 
                        \Phi(\typed{y_{il_i}\tau}{B_{il_i}}) \right) \\
  & = \Phi(\typed{\sigma \dunion \tau}{\Delta}) + \sum_{i=1}^n k_i  \tpkt
\end{align*}
Here the first equality follows by an inspection on the case for the constructors. 
In sum, we obtain 
\begin{equation*}
  \Phi(\typed{\sigma}{\Gamma}) - \Phi(\typed{v}{C}) + p =
  \Phi(\typed{\sigma \dunion \tau}{\Delta}) + \sum_{i=1}^n k_i - \Phi(\typed{v}{C}) + p 
  \geqslant m+1
  \tkom
\end{equation*}
from which the theorem follows.

\item Suppose the last rule in $\Pi$ has the form
\begin{equation*}
    \infer{\eval{\sigma}{m}{f(t_1,\dots,t_n)}{v}}{%
      \eval{\sigma \dunion \rho}{m_0}{f(x_1,\ldots,x_n)}{v}
      & \eval{\sigma}{m_1}{t_1}{v_1}
      & \cdots
      & \eval{\sigma}{m_n}{t_n}{v_n}
      & m = \sum_{i=0}^n m_i%
    }
    \tpkt
\end{equation*}
We can assume that $t$ is linear, compare
the case employing the share operator. Hence the last rule in the type inference $\Xi$ is of the following
form. 
\begin{equation*}
  \infer{\tjudge{\Gamma_1,\dots,\Gamma_n}{p}{\typed{f(t_1,\dots,t_n)}{C}}}{%
      \tjudge{\overbrace{\typed{y_1}{A_1},\dots,\typed{y_n}{A_n}}^{{} =: \Delta}}{p_0}{%
             \typed{f(\vec{y})}{C}}
      & \tjudge{\Gamma_1}{p_1}{\typed{t_1}{A_1}}
      & \cdots
      & \tjudge{\Gamma_n}{p_n}{\typed{t_n}{A_n}}
      & p = \sum_{i=0}^n p_i%
    }
    \tpkt
\end{equation*}
By induction hypothesis:
$\Phi(\typed{\sigma}{\Gamma_i}) - \Phi(\typed{v_i}{A_i}) + p_i \geqslant m_i$
for all $i=1,\dots,n$.
Hence 
\begin{equation}
\label{eq:6}
  \sum_{i=1}^n \Phi(\typed{\sigma}{\Gamma_i}) - \sum_{i=1}^n \Phi(\typed{v_i}{A_i}) + \sum_{i=1}^n p_i
  \geqslant \sum_{i=1}^n m_i \tpkt
\end{equation}
Again by induction hypothesis we obtain:
\begin{equation}
\label{eq:7}
  \Phi(\typed{\sigma \dunion \rho}{\Delta}) - \Phi(\typed{v}{C}) + p_0 \geqslant
  m_0 \tpkt
\end{equation}
Now $\Phi(\typed{\sigma}{\Gamma}) = \sum_{i=1}^n \Phi(\typed{\sigma}{\Gamma_i})$
and $\Phi(\typed{\sigma \dunion \rho}{\Delta}) = \Phi(\typed{\rho}{\Delta}) =
\sum_{i=1}^n \Phi(\typed{v_i}{A_i})$. 
Due to~\eqref{eq:6} and~\eqref{eq:7}, we obtain 
\begin{align*}
  \Phi(\typed{\sigma}{\Gamma}) + \sum_{i=0}^n p_i & = 
  \sum_{i=1}^n \Phi(\typed{\sigma}{\Gamma_i}) + \sum_{i=1}^n p_i + p_0\\
  & \geqslant \sum_{i=1}^n \Phi(\typed{v_i}{A_i}) + \sum_{i=1}^n m_i + p_0 
  \geqslant \Phi(\typed{v}{C}) + \sum_{i=0}^n m_i 
  \tkom
\end{align*}
and thus $\Phi(\typed{\sigma}{\Gamma} - \Phi(\typed{v}{C}) + p \geqslant
m$.

\item Suppose $\Xi$ is of form
  \begin{equation*}
    \infer{\tjudge{\Gamma}{p'}{\typed{t}{C}}}{%
        \tjudge{\Gamma}{p}{\typed{t}{C}}
        &
        p' \geqslant p
        }
        \tpkt
  \end{equation*}
By side-induction on $\tjudge{\Gamma}{p}{\typed{t}{C}}$ together
with $\eval{\sigma}{m}{t}{v}$ we conclude
$\Phi(\typed{\sigma}{\Gamma}) - \Phi(\typed{v}{A}) + p \geqslant m$.
Then the theorem follows from the assumption $p' \geqslant p$. 

\item Suppose $\Xi$ is of form
  \begin{equation*}
    \infer{\tjudge{\Gamma, \typed{x}{A}}{p}{\typed{t}{C}}}{%
      \tjudge{\Gamma}{p}{\typed{t}{C}}
      }
      \tpkt
  \end{equation*}
We conclude by side-induction together
with $\eval{\sigma}{m}{t}{v}$ we conclude
$\Phi(\typed{\sigma}{\Gamma}) - \Phi(\typed{v}{A}) + p \geqslant m$. Clearly
$\Phi(\typed{\sigma}{\Gamma, \typed{x}{A}}) \geqslant \Phi(\typed{\sigma}{\Gamma})$
and the theorem follows.

\item Suppose $\Xi$ is of form
  \begin{equation*}
    \infer{\tjudge{\Gamma, \typed{z}{A}}{p}{\typed{t[z,z]}{C}}}{%
      \tjudge{\Gamma, \typed{x}{A_1}, \typed{y}{A_2}}{p}{\typed{t[x,y]}{C}}%
      & 
      \share{A}{A_1,A_2}
      }
  \end{equation*}
By assumption $\eval{\sigma}{m}{t[z,z]}{v}$; let
$\rho \defsym \sigma \dunion \{x \mapsto z\sigma, y \mapsto z\sigma\}$.
As $\eval{\sigma}{m}{t[z,z]}{v}$, we obtain 
$\eval{\rho}{m}{t[x,y]}{v}$ by definition. From the side-induction on 
$\tjudge{\Gamma, \typed{x}{A_1}, \typed{y}{A_2}}{p}{\typed{t[x,y]}{C}}$
and $\eval{\rho}{m}{t[x,y]}{v}$ we conclude that 
\begin{equation*}
 \Phi(\typed{\rho}{\Gamma,\typed{x}{A_1}, \typed{y}{A_2}}) - \Phi(\typed{v}{C} + p \geqslant m 
 \tpkt
\end{equation*}
The theorem follows as by definition of $\rho$ and Lemma~\ref{l:10}, we obtain
\begin{equation*}
  \Phi(\typed{\sigma}{\Gamma,\typed{z}{A}}) = 
  \Phi(\typed{\rho}{\Gamma, \typed{x}{A_1}, \typed{y}{A_2}})
  \tpkt
\end{equation*}

\item Suppose $\Xi$ is of form
\begin{equation*}
    \infer{\tjudge{\Gamma, \typed{x}{A}}{p}{\typed{t}{C}}}{%
      \tjudge{\Gamma, \typed{x}{B}}{p}{\typed{t}{C}}
      &
      A \subtype B
      }
\end{equation*}
By assumption $\eval{\sigma}{m}{t}{v}$ and by induction hypothesis
$\Phi(\typed{\sigma}{\Gamma, \typed{x}{B}}) - \Phi(\typed{v}{A}) + p \geqslant
m$. 
By definition of the subtype relation 
$\Phi(\typed{x\sigma}{A}) \geqslant \Phi(\typed{x\sigma}{B})$. Hence
the theorem follows. 

\item Suppose $\Xi$ is of form
\begin{equation*}
    \infer{\tjudge{\Gamma}{p}{\typed{t}{C}}}{%
      \tjudge{\Gamma}{p}{\typed{t}{D}}
      &
      D \subtype C
      }
\end{equation*}
The case follows similarly to the sub-case before by induction
hypothesis.
From this the theorem follows.
\end{enumerate}

The second assertion of the theorem follows from the first together with 
the assumption that every defined symbol in $\FS$ is well-typed and 
Proposition~\ref{p:1}.
\end{proof}

\begin{example}[continued from Example~\ref{ex:1}]
\label{ex:5}  
Consider the TRS $\RSa$ from Example~\ref{ex:1}. We detail the signature $\FS$,
starting with the constructor symbols. 
\begin{alignat*}{4}
  \zero\colon\!&\aNat{p} & \hspace{2ex} 
  \mS\colon\!&\atypdcl{\aNat{p}}{\aNat{p}}{p} & \hspace{2ex}
  \errorHead\colon\!& \aNat{p} & \hspace{2ex}
  \queue\colon\!&\atypdcl{\aList{p} \times \aList{q}}{\aQueue{(p,q)}}{0}
  \\
  \nil\colon\!&\aList{q} & \hspace{2ex} 
  \cons\colon\!& \atypdcl{\aNat{0} \times \aList{q}}{\aList{q}}{q} & \hspace{2ex} 
  \errorTail\colon\!&\aQueue{(0,1)} \tkom
\end{alignat*}
where $p,q \in \N$. Furthermore we make use of the following types for defined
symbols.
\begin{alignat*}{4}
  \checkF\colon &\atypdcl{\aQueue{(0,1)}}{\aQueue{(0,1)}}{3} & \hspace{2.5ex}
  \tail\colon &\atypdcl{\aQueue{(0,1)}}{\aQueue{(0,1)}}{4} & \hspace{2.5ex}
  \head\colon &\atypdcl{\aQueue{(0,1)}}{\aNat{0}}{1}
  \\
  \revp\colon &\atypdcl{\aList{1} \times \aList{0}}{\aList{0}}{1} & \hspace{2.5ex}
  \rev\colon &\atypdcl{\aList{1} \times \aList{0}}{\aList{0}}{2} & \hspace{2.5ex}
  \\
  \snoc\colon &\atypdcl{\aQueue{(0,1)} \times \aNat{0}}{\aQueue{(0,1)}}{5} & \hspace{2.5ex}
  \enq\colon &\atypdcl{\aNat{6}}{\aQueue{(0,1)}}{1} \tkom \hspace{2.5ex}
\end{alignat*}
Let $\FS$ denote the induced signature. Based on the above definitions
it is not difficult to verify that $\RSa$ is well-typed wrt.~$\FS$. 
We show that $\enq$ is well-typed. 
Consider rule~6. First, we observe that $6$ resource units become available
for the recursive call, as 
$\tjudge{\typed{n}{\aNat{6}}}{6}{\typed{\mS(n)}{\aNat{6}}}$ is derivable.
Second, we have the following partial type derivation; missing
parts are easy to fill in.
\begin{equation*}
  \infer{\tjudge{\typed{n}{\aNat{6}}}{6}{\typed{\snoc(\enq(n),n)}{\aQueue{(0,1)}}}}{%
    \infer{\tjudge{\typed{n_1}{\aNat{6}},\typed{n_2}{\aNat{0}}}{6}{\typed{\snoc(\enq(n_1),n_2)}{\aQueue{(0,1)}}}}{%
      \tjudge{\typed{q}{\aQueue{(0,1)}},\typed{m}{\aNat{0}}}{5}{\typed{\snoc(q,m)}{\aQueue{(0,1)}}}
      &
      \begin{minipage}[b]{35ex}
        \mbox{} \hfill $\tjudge{\typed{n_2}{\aNat{0}}}{0}{\typed{n_2}{\aNat{0}}}$\\[1ex]
        $\tjudge{\typed{n_1}{\aNat{6}}}{1}{\typed{\enq(n_1)}{\aQueue{(0,1)}}}$
      \end{minipage}
    }
  }
\end{equation*}
Considering rule~7, it is easy to see that 
$\tjudge{\typed{n}{\aNat{6}}}{0}{\typed{\queue(\nil,\nil)}{\aQueue{(0,1)}}}$
is derivable. Thus $\enq$ is well-typed and we conclude optimal linear runtime complexity of $\RSa$.
\end{example}

\paragraph{Polynomial bounds} 

Note that if the type annotations are chosen such that for each type
$A$ we have $\Phi(\typed{v}{A}) \in \bigO(n^k)$ for $n=\len{v}$ then
$\rc_{\RS}(n) \in \bigO(n^k)$ as well. The following proposition gives
a sufficient condition as to when this is the case and in particular
subsumes the type system in \cite{HoffmannH10a}.

\begin{theorem}\label{polyt}
Suppose that for each constructor $c$ with $\atypdcl{A_1^{\vec{u_1}} \times \cdots \times A_n^{\vec{u_n}}}{C^{\vec{w}}}{p} \in \FS(c)$, there exists $\vec{r}_i \in \Vecs$ such that 
$\vec{u_i} \leqslant \vec{w} + \vec{r}_i$ where $\max{\vec r}_i \leqslant \max \vec{w}=:r$ and 
$p \leqslant r$ with $\len{\vec{r}_i} < \len{\vec w}=:k$. 
Then $\Phi(\typed{v}{C^{\vec w}}) \leqslant r\len{v}^k$. 
\end{theorem}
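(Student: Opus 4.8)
The plan is to prove, by induction on the structure of the value $v$, a sharpened claim: for every annotated type $\atyp{C}{\vec w}$, writing $r := \max\vec w$ and $k := \len{\vec w}$, and for every $v$ of type $C$,
\[
\Phi(\typed{v}{\atyp{C}{\vec w}}) \;\leqslant\; r\sum_{j=1}^{k}\binom{\len{v}}{j}\,.
\]
The theorem then follows from the elementary estimate $\sum_{j=1}^{k}\binom{m}{j}\leqslant m^{k}$ (valid for $m\geqslant 0$, $k\geqslant 1$, a one-line induction on $m$ using Pascal's rule). Carrying the binomials rather than $\len{v}^{k}$ directly is what keeps the induction tight enough to preserve the constant $r$; note that the induction hypothesis must be quantified over all annotated types of $v$'s type, which is legitimate since the constructor hypothesis is assumed for every result annotation.

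For the inductive step, write $v = c(v_1,\dots,v_n)$ and let $\atypdcl{A_1^{\vec{u_1}} \times \cdots \times A_n^{\vec{u_n}}}{\atyp{C}{\vec w}}{p}$ be the declaration in $\FS(c)$ for result annotation $\vec w$ (unique by the assumption on $\FS$), so that $\Phi(\typed{v}{\atyp{C}{\vec w}}) = p + \sum_{i=1}^{n}\Phi(\typed{v_i}{A_i^{\vec{u_i}}})$ by Definition~\ref{d:potential}. Pick the $\vec{r}_i$ furnished by the hypothesis: $\vec{u_i}\leqslant\vec w+\vec{r}_i$, $\max\vec{r}_i\leqslant r$, $\len{\vec{r}_i}<k$, and $p\leqslant r$. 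First applying the subtyping part of Lemma~\ref{l:10} (monotonicity of $\Phi$ in the annotation) and then its sharing part (additivity, i.e.\ superposition) gives $\Phi(\typed{v_i}{A_i^{\vec{u_i}}})\leqslant\Phi(\typed{v_i}{A_i^{\vec w}}) + \Phi(\typed{v_i}{A_i^{\vec{r}_i}})$. Now the induction hypothesis applied to the structurally smaller $v_i$ bounds the first summand by $r\sum_{j=1}^{k}\binom{\len{v_i}}{j}$ and, using $\max\vec{r}_i\leqslant r$ and $\len{\vec{r}_i}\leqslant k-1$, the second by $r\sum_{j=1}^{k-1}\binom{\len{v_i}}{j}$. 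Summing over $i$ and adding $p\leqslant r$ reduces the goal to a purely arithmetical inequality between binomial sums.

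Checking that arithmetic is the one genuine obstacle. Using $\len{v}\geqslant 1+\sum_i\len{v_i}$ together with Pascal's rule and the superadditivity $\binom{a_1+\cdots+a_n}{j}\geqslant\sum_i\binom{a_i}{j}$ (a special case of Vandermonde's identity), one obtains $\binom{\len{v}}{j}\geqslant\sum_i\bigl(\binom{\len{v_i}}{j}+\binom{\len{v_i}}{j-1}\bigr)$; and since $r\sum_{j=1}^{k}\binom{\len{v_i}}{j-1}=r+r\sum_{j=1}^{k-1}\binom{\len{v_i}}{j}$, the shifted binomials absorb exactly the lower-degree contributions of the $\vec{r}_i$-parts together with the constructor cost (here $p\leqslant r\leqslant nr$ suffices). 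The base case $n=0$, i.e.\ $v=c$, is immediate: $\Phi(\typed{v}{\atyp{C}{\vec w}})=p\leqslant r=r\binom{1}{1}$. For (generalised) list types, and for $\Nat$, only one argument of each constructor carries a non-zero annotation, so the superadditivity step collapses to a single application of Pascal's rule, exactly mirroring the computation of $\Phi(\typed{l}{\aList{\vec p}})$ in Example~\ref{ex:4}.
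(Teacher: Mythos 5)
Your proof is correct and follows essentially the same route as the paper's: structural induction on $v$, the split $\Phi(\typed{v_i}{A_i^{\vec{u_i}}})\leqslant\Phi(\typed{v_i}{A_i^{\vec{w}}})+\Phi(\typed{v_i}{A_i^{\vec{r}_i}})$ obtained from the subtyping and sharing parts of Lemma~\ref{l:10}, and the induction hypothesis applied at both $\vec{w}$ and $\vec{r}_i$ — you merely replace the paper's direct appeal to the multinomial theorem by the strengthened invariant $r\sum_{j=1}^{k}\binom{\len{v}}{j}$, which closes the arithmetic soundly. The one point you leave implicit is the degenerate annotation of length $0$ (the paper's $k=0$ clause), which your induction does need when $\len{\vec{r}_i}=0$: there the constructor hypothesis furnishes no $\vec{r}_i$, but the potential vanishes by superposition and uniqueness (equivalently, by the sharing part of Lemma~\ref{l:10} applied with the zero vector), so the claim still holds; adding that one sentence makes your argument complete.
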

\begin{proof}
The proof is by induction on the size of $v$. Note that, if $k=0$ then 
$\Phi(\typed{v}{C^{\vec{w}}})=0$. This follows by 
superposition and uniqueness. Otherwise,  we have
\begin{align*}
  \Phi(\typed{c(v_1,\dots,v_n)}{C^{\vec w}}) & \leqslant 
  r+\Phi(\typed{v_1}{A_1^{\vec{w} +\vec{r}_1}})+\dots+\Phi(\typed{v_n}{A_n^{\vec{w}+\vec{r}_n}})
  \\
  & \leqslant r(1+\len{v_1}^k+\len{v_1}^{k-1}+\dots+\len{v_n}^k+\len{v_n}^{k-1})
  \\
  & \leqslant r(1+\len{v_1}+\dots+\len{v_n})^k=r\len{v}^k
  \tpkt
\end{align*}
Here we employ Lemma~\ref{l:10} to conclude for all $i=1,\dots,n$: 
\begin{equation*}
  \Phi(\typed{v_i}{A_i^{\vec{w} +\vec{r}_i}}) = \Phi(\typed{v_i}{A_i^{\vec{w}}}) + \Phi(\typed{v_i}{A_i^{\vec{r}_i}})
  \tpkt
\end{equation*}
Based on this observation we apply induction hypothesis to obtain
the second line. Furthermore in the last line we employ the
multinomial theorem. 
\end{proof}

We note that our running example satisfies the premise to the
proposition.  In concrete cases more precise bounds than those given
by Theorem~\ref{polyt} can be computed as has been done in Example~\ref{ex:4}.
The next example clarifies that potentials are not restricted to polynomials.

\begin{example}
Consider that we annotate the constructors for natural numbers as 
$\typed{\zero}{\aNat{\vec p}}$ and 
$\typed{\mS}{\atypdcl{\aNat{2\vec{p}}}{\aNat{\vec{p}}}{\Diamond\vec{p}}}$. 
We then have, for example, $\Phi(\typed{t}{\aNat{1}})=2^{t+1}-1$.
\end{example}

As mentioned in the introduction, foundational issues are our main
concern.  However, the potential-based method detailed above seems
susceptible to automation.  One conceives the resource annotations as
variables and encodes the constraints of the typing rules in
Figure~\ref{fig:2} over these resource variables.

\section{Small-Step Semantics}
\label{SmallStep}

\begin{figure}[t]
  \centering
  \begin{tabular}{c}
   $\infer{\smallstep{0}{x}{\sigma}{v}{\sigma}}{%
     x\sigma = v%
   }$
   \hfill
   $\infer{\smallstep{0}{c(x_1,\dots,x_n)}{\sigma}{c(v_1,\dots,v_n)}{\sigma}}{%
       c \in \CS
       &
       x_1\sigma = v_1 
       & \cdots
       & x_n\sigma = v_n%
     }$
   \\[3ex]      
   $\infer{\smallstep{0}{f(v_1,\dots,v_n)}{\sigma}{f(x_1,\dots,x_n)}{\sigma \dunion \rho}}{%
     \forall i\colon \text{$v_i$ is a value}
     & \rho = \{x_1 \mapsto v_1,\dots,x_n \mapsto v_n\}
     & \text{$f$ is defined and all $x_i$ are fresh}
     }$
   \\[3ex]
   $\infer{\smallstep{1}{f(x_1,\dots,x_n)}{\sigma}{r}{\sigma \dunion \tau}}{%
       f(l_1,\dots,l_n) \to r \in \RS
       & \forall i\colon x_i\sigma = l_i\tau
      }$
   \\[3ex]
   $\infer{\smallstep{1}{f(\dots,t_i,\dots)}{\sigma}{f(\dots,u,\dots)}{\sigma'}}{%
       \smallstep{1}{t_i}{\sigma}{u}{\sigma'}
       }$
  \end{tabular}

\bigskip
\noindent
Note that the substitutions $\sigma$, $\sigma'$, $\tau$, and 
$\rho$ are normalised. 
\caption{Operational Small-Step Semantics}
\label{fig:3}
\end{figure}

The big-step semantics, the type system, and Theorem~\ref{l:2} 
provide an amortised resource analysis for typed TRSs that yields
polynomial bounds. However, Theorem~\ref{l:2} is not
directly applicable, if we want to link this analysis to the interpretation method. 
We recast the method and present a small-step semantics, which will be used in our second soundness
results (Theorem~\ref{l:5} below), cf.~Figure~\ref{fig:3}.
As the big-step semantics, the small-step semantics is decorated
with counters for the derivation height of the evaluated
terms.

Suppose $\smallstep{}{s}{\sigma}{t}{\sigma'}$ holds for terms $s,t$
and substitutions $\sigma,\sigma'$. An inspection of the rules
shows that $\sigma'$ is an extension of $\sigma$. Moreover
we have the following fact.

\begin{lemma}
\label{l:9}
Let $s, t$ be terms, let $\sigma$ be a normalised substitution such that
$\Var(s) \subseteq \dom(\sigma)$ and suppose $\smallstep{}{s}{\sigma}{t}{\sigma'}$. 
Then $\sigma'$ extends $\sigma$ and $s\sigma = s\sigma'$.
\end{lemma}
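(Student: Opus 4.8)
The plan is to prove Lemma~\ref{l:9} by induction on the derivation of $\smallstep{}{s}{\sigma}{t}{\sigma'}$ according to the four rules of Figure~\ref{fig:3}. Two properties must be tracked simultaneously: that $\sigma'$ extends $\sigma$, and that $s\sigma = s\sigma'$. The first property is in fact already asserted in the paragraph preceding the lemma, but I would re-establish it inline since the second property's proof depends on knowing exactly \emph{which} variables $\sigma'$ adds to $\sigma$ — namely fresh ones not occurring in $s$.

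\medskip\noindent
First I would dispatch the two axiom cases. For $\smallstep{0}{x}{\sigma}{v}{\sigma}$ and $\smallstep{0}{c(x_1,\dots,x_n)}{\sigma}{c(v_1,\dots,v_n)}{\sigma}$ the output substitution literally equals $\sigma$, so both claims are trivial. Next, the third rule $\smallstep{0}{f(v_1,\dots,v_n)}{\sigma}{f(x_1,\dots,x_n)}{\sigma\dunion\rho}$: here $\sigma' = \sigma\dunion\rho$ with $\rho = \{x_1\mapsto v_1,\dots,x_n\mapsto v_n\}$ and the $x_i$ fresh. Freshness gives $\dom(\rho)\cap\dom(\sigma) = \varnothing$ so the disjoint union is well-defined and extends $\sigma$; and since $s = f(v_1,\dots,v_n)$ is a term whose variables all lie in $\dom(\sigma)$ (by the hypothesis $\Var(s)\subseteq\dom(\sigma)$) and the $x_i$ are fresh, $x_i\notin\Var(s)$, hence $s\sigma = s\sigma'$. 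The fourth axiom $\smallstep{1}{f(x_1,\dots,x_n)}{\sigma}{r}{\sigma\dunion\tau}$ is handled the same way: $\tau$ binds only the fresh pattern variables occurring in the $l_i$, which by the freshness convention are disjoint from $\Var(f(x_1,\dots,x_n)) = \{x_1,\dots,x_n\}\subseteq\dom(\sigma)$, so again $\sigma'$ extends $\sigma$ and leaves $s\sigma$ unchanged. (Strictly one should note $\Var(s)\subseteq\dom(\sigma)$ here forces the $x_i$ to be the argument positions, and the rule's $\tau$ touches none of them.)

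\medskip\noindent
The only genuinely inductive case is the congruence rule $\smallstep{1}{f(\dots,t_i,\dots)}{\sigma}{f(\dots,u,\dots)}{\sigma'}$ derived from $\smallstep{1}{t_i}{\sigma}{u}{\sigma'}$. Since $\Var(t_i)\subseteq\Var(s)\subseteq\dom(\sigma)$, the induction hypothesis applies to the premise, giving that $\sigma'$ extends $\sigma$ and $t_i\sigma = t_i\sigma'$. Extension of $\sigma$ by $\sigma'$ transfers verbatim to the conclusion. For the equation, I would argue that the new variables in $\sigma'$ are fresh — so they occur neither in $s$ nor in $u$ — whence for every subterm $t_j$ with $j\neq i$ we still have $t_j\sigma = t_j\sigma'$ (its variables are in $\dom(\sigma)$ and $\sigma'$ agrees with $\sigma$ there), and for the $i$-th position the IH gives $t_i\sigma = t_i\sigma'$; assembling, $s\sigma = f(\dots,t_i\sigma,\dots) = f(\dots,t_i\sigma',\dots) = s\sigma'$.

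\medskip\noindent
I expect the main obstacle to be the bookkeeping around freshness: the lemma's equation $s\sigma = s\sigma'$ silently relies on the invariant that every variable $\sigma'$ adds beyond $\sigma$ is fresh, i.e.\ does not occur in $s$. This is visibly true rule-by-rule from the side conditions ("all $x_i$ are fresh"), but to make the induction go through cleanly in the congruence case one may wish to carry a slightly strengthened hypothesis — e.g.\ that $\dom(\sigma')\setminus\dom(\sigma)$ consists of variables disjoint from $\Var(s)$ (equivalently from any fixed finite set of "already-used" variables) — rather than just the bare statement as given. With that strengthening the congruence step is routine; without it one has to appeal informally to the global freshness convention. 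Everything else is immediate from the definitions.
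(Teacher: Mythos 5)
Your proof is correct: the rule-by-rule induction goes through, and every case is handled soundly. However, your diagnosis of the ``main obstacle'' is a misreading of where the work lies, and this is where you diverge from the paper. The paper proves only the first assertion (that $\sigma'$ extends $\sigma$) by induction on the relation $\smallstep{}{s}{\sigma}{t}{\sigma'}$; the equation $s\sigma = s\sigma'$ is then a one-line consequence that needs neither induction nor any freshness bookkeeping: extension means $\sigma = \rst{\sigma'}{\dom(\sigma)}$, and since $\Var(s) \subseteq \dom(\sigma)$ we get $s\sigma = s(\rst{\sigma'}{\dom(\sigma)}) = s\sigma'$. In particular, the strengthened invariant you propose to carry --- that $\dom(\sigma') \setminus \dom(\sigma)$ is disjoint from $\Var(s)$ --- is automatically true and never needed: that set is by definition disjoint from $\dom(\sigma)$, which already contains $\Var(s)$, so no appeal to a global freshness convention is required for the second claim. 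Freshness of the newly bound variables only matters for the first claim, namely to see that $\sigma \dunion \rho$ and $\sigma \dunion \tau$ are well-defined and agree with $\sigma$ on $\dom(\sigma)$. So your simultaneous induction re-establishes pointwise, in each case, what follows uniformly from the first assertion plus the hypothesis $\Var(s) \subseteq \dom(\sigma)$; adopting the paper's factorisation makes the congruence case you flag as delicate disappear entirely.
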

\begin{proof}
The first assertion follows by induction on the relation
$\smallstep{}{s}{\sigma}{t}{\sigma'}$. Now suppose $\sigma = \rst{\sigma'}{\dom(\sigma)}$.
Then $s\sigma = s(\rst{\sigma'}{\dom(\sigma)}) = s\sigma'$.
\end{proof}

The transitive closure of the judgement $\smallstep{m}{s}{\sigma}{t}{\tau}$
is defined as follows:
\begin{enumerate}
\item $\tsmallstep{m}{s}{\sigma}{t}{\tau}$ if $\smallstep{m}{s}{\sigma}{t}{\tau}$
\item $\tsmallstep{m_1 + m_2}{s}{\sigma}{u}{\rho}$ if $\smallstep{m_1}{s}{\sigma}{t}{\tau}$
and $\tsmallstep{m_2}{t}{\tau}{u}{\rho}$.
\end{enumerate}
The next lemma proves the equivalence of big-step and small-step semantics.

\begin{lemma}
\label{l:3}
Let $\sigma$ be a normalised substitution, let $t$ be a term, 
$\Var(t) \subseteq \dom(\sigma)$, and let $v$ be a value. 
Then $\eval{\sigma}{m}{t}{v}$ if and only if 
$\tsmallstep{m}{t}{\sigma}{v}{\sigma'}$, 
where $\sigma'$ is an extension of $\sigma$.
\end{lemma}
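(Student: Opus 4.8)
The plan is to prove both directions by induction, exploiting the structural similarity between the two semantics. For the direction from big-step to small-step, I would proceed by induction on the size of the derivation $\Pi$ of $\eval{\sigma}{m}{t}{v}$, performing a case analysis on the last rule of $\Pi$. The two axiom cases (variable, constructor applied to variables) are immediate, since the corresponding small-step axioms produce exactly the same value with $\sigma' = \sigma$ in zero steps. The interesting case is the rule for $f(x_1,\dots,x_n)$ with $\eval{\sigma \dunion \tau}{m}{r}{v}$: here I would first apply the small-step rule $\smallstep{1}{f(x_1,\dots,x_n)}{\sigma}{r}{\sigma \dunion \tau}$ (using that $x_i\sigma = l_i\tau$), then invoke the induction hypothesis on the sub-derivation $\eval{\sigma\dunion\tau}{m}{r}{v}$ to obtain $\tsmallstep{m}{r}{\sigma\dunion\tau}{v}{\sigma'}$, and concatenate via the transitive closure to get $\tsmallstep{m+1}{f(x_1,\dots,x_n)}{\sigma}{v}{\sigma'}$. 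For the general congruence rule evaluating $f(t_1,\dots,t_n)$, I would evaluate the arguments $t_i$ one at a time: apply the induction hypothesis to each $\eval{\sigma}{m_i}{t_i}{v_i}$ to obtain $\tsmallstep{m_i}{t_i}{\sigma}{v_i}{\cdot}$, lift each of these to a reduction of $f(\dots)$ via repeated use of the last small-step rule (the argument-congruence rule), then fire the value-pushing rule $\smallstep{0}{f(v_1,\dots,v_n)}{\cdot}{f(x_1,\dots,x_n)}{\cdot\dunion\rho}$, and finally append the reduction coming from the induction hypothesis on $\eval{\sigma\dunion\rho}{m_0}{f(x_1,\dots,x_n)}{v}$. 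Summing the counters gives $m = \sum_{i=0}^n m_i$ as required. Here Lemma~\ref{l:9} and Lemma~\ref{l:4} are needed to ensure the substitutions stay coherent (so that reducing $t_i$ inside a context does not disturb the already-computed values $v_1,\dots,v_{i-1}$, which are values and hence invariant under substitution extension).

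For the converse direction, from small-step to big-step, I would induct on the number of steps in the derivation $\tsmallstep{m}{t}{\sigma}{v}{\sigma'}$. The base case $0$ steps forces $t$ to already be a value-form matched by one of the two small-step axioms, and the matching big-step axiom applies. For the inductive step, the cleanest route is to prove a stronger statement with an auxiliary lemma of the shape: if $\smallstep{k}{s}{\sigma}{s'}{\tau}$ and $\eval{\tau}{m}{s'}{v}$, then $\eval{\sigma}{k+m}{s}{v}$. This "one small step then a big-step evaluation collapses to a big-step evaluation" lemma is itself proved by induction on the derivation of the single small step $\smallstep{k}{s}{\sigma}{s'}{\tau}$, again by case analysis on which of the four small-step rules was used; the argument-congruence rule is handled by an inner induction that rebuilds the big-step derivation of $f(\dots,t_i,\dots)$ from the big-step derivation of $f(\dots,u,\dots)$ and the recursive call on $t_i$. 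Given this auxiliary lemma, the converse direction of Lemma~\ref{l:3} follows by peeling off the first small step from $\tsmallstep{m}{t}{\sigma}{v}{\sigma'}$ and applying the induction hypothesis to the remaining $\tsmallstep{m-k}{\cdot}{\cdot}{v}{\sigma'}$ segment.

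The main obstacle I anticipate is bookkeeping the substitutions and the freshness of the introduced variables $x_i$ across the congruence and value-pushing rules, rather than any conceptual difficulty: one must check that when the small-step semantics descends into an argument $t_i$ of $f(v_1,\dots,v_{i-1},t_i,\dots,t_n)$, the extension of $\sigma$ produced does not clash with or alter the earlier values, and that when reassembling a big-step derivation from a small-step trace the side conditions "all $x_i$ fresh" and "$\sigma,\tau,\rho$ normalised" are preserved. Lemmas~\ref{l:4} and~\ref{l:9} are exactly the tools that discharge these obligations — the former guarantees that evaluating under an extended substitution gives the same value and derivation size, and the latter guarantees that a small step neither shrinks the substitution nor changes the term it acts on. I expect the additive decomposition $m = \sum_i m_i$ of the counter to go through transparently once the structural correspondence is set up, since both semantics were deliberately instrumented so that exactly the rule-application steps (the $m+1$ big-step rule and the $\smallstep{1}{\cdot}{\cdot}{\cdot}{\cdot}$ small-step rules) carry a cost of one and all administrative steps carry cost zero.
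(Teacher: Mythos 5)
Your plan is correct, and your forward direction coincides with the paper's: induction on the big-step derivation, with the interesting case assembling the argument evaluations, the value-pushing step and the evaluation of $f(x_1,\dots,x_n)$ via the transitive closure, using Lemma~\ref{l:4} (and Lemma~\ref{l:9}) for the substitution bookkeeping. Your converse direction, however, is organised genuinely differently. The paper also inducts over the small-step trace, but performs a case analysis on the proof of the \emph{first} reduction, and in the congruence situation it abandons the step-by-step view and consumes an entire initial segment of the trace (all argument evaluations followed by the value-pushing step), then uses Lemma~\ref{l:9} to transfer $\eval{\sigma_{i-1}}{m_i}{t_i}{v_i}$ back to $\sigma$. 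You instead factor out an explicit expansion lemma --- from $\smallstep{k}{s}{\sigma}{s'}{\tau}$ and $\eval{\tau}{m}{s'}{v}$ conclude $\eval{\sigma}{k+m}{s}{v}$ --- proved by induction on the single step, and then peel the trace one step at a time. This buys uniformity: you never need the trace to have the canonical shape in which arguments are evaluated in order before the value-pushing rule fires, so arbitrary interleavings of reductions in different arguments are covered by one congruence case, which is exactly the point where the paper's printed argument is most terse. The cost is the extra lemma plus the same bookkeeping the paper needs: in your congruence case the untouched arguments $t_j$ come with evaluations under the extended substitution, and you must argue $\eval{\sigma}{m_j}{t_j}{v_j}$ from $t_j\tau=t_j\sigma$ (the restriction counterpart of Lemma~\ref{l:4}, supplied by Lemma~\ref{l:9}), which you correctly anticipate. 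One small slip: with the paper's definition of $\tsmallstep{m}{t}{\sigma}{v}{\sigma'}$ the base case is a \emph{single} step, not zero steps, and that step need not be one of the two axioms (a rule step $\smallstep{1}{f(x_1,\dots,x_n)}{\sigma}{r}{\sigma\dunion\tau}$ may land directly in a ground value); since your expansion lemma together with the observation that values evaluate to themselves with cost $0$ disposes of this case immediately, the slip is harmless.
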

\begin{proof}
First we prove the direction from left to right.
\begin{enumerate}
\item Suppose $\Pi$ has the form: 
\begin{equation*}
    \infer{\eval{\sigma}{0}{x}{v}}{x\sigma = v}
    \tkom
\end{equation*}
such that $t = x$ and $v = x\sigma$. Hence we obtain 
$\tsmallstep{0}{x}{\sigma}{v}{\sigma}$.

\item Suppose $\Pi$ has the form: 
\begin{equation*}
 \infer{\eval{\sigma}{0}{c(x_1,\dots,x_n)}{c(v_1,\dots,v_n)}}{%
       c \in \CS
       &
       x_1\sigma = v_1 
       & \cdots
       & x_n\sigma = v_n%
     }
     \tkom
\end{equation*}
such that $t = c(x_1,\dots,x_n)$ and $v = c(v_1,\dots,v_n)$.
Again, we directly obtain $\tsmallstep{0}{t}{\sigma}{v}{\sigma}$.

\item Suppose the last rule in $\Pi$ if of form:
  \begin{equation*}
    \infer{\eval{\sigma}{m+1}{f(x_1,\dots,x_n)}{v}}{%
      f(l_1,\dots,l_n) \to r \in \RS
       & \forall i\colon x_i\sigma = l_i\tau
       & \eval{\sigma \dunion \tau}{m}{r}{v}
     }
     \tkom
  \end{equation*}
where $t=f(x_1,\dots,x_n)$.
By hypothesis there exists an extension $\sigma'$ of $\sigma \dunion \tau$
such that $\tsmallstep{m}{r}{\sigma \dunion \tau}{v}{\sigma'}$. Furthermore,
we have $\smallstep{1}{t}{\sigma}{r}{\sigma \dunion \tau}$. 
Thus $\tsmallstep{m+1}{t}{\sigma}{v}{\sigma'}$. By definition 
$\dom(\sigma) \cap \dom(\tau) = \varnothing$. Hence 
$\sigma'$ is an extension of $\sigma$.

\item Finally, suppose the last rule in $\Pi$ has the form
  \begin{equation*}
    \infer{\eval{\sigma}{m}{f(t_1,\dots,t_n)}{v}}{%
      \eval{\sigma \dunion \rho}{m_0}{f(x_1,\dots,x_n)}{v}%
      & \eval{\sigma}{m_1}{t_1}{v_1}
      & \cdots
      & \eval{\sigma}{m_n}{t_n}{v_n}
      & m = \sum_{i=0}^n m_i
       }
       \tkom
  \end{equation*}
where $t=f(t_1,\dots,t_n)$.
By induction hypothesis (and repeated use of Lemma~\ref{l:4}), 
we have for all $i=1,\dots,n$: 
$\tsmallstep{m_{i}}{t_1}{\sigma_{i-1}}{v_1}{\sigma_i}$,
where we set $\sigma_0 = \sigma$ and note that all $\sigma_i$ are extensions of $\sigma$. 
As $\smallstep{0}{f(v_1,\dots,v_n)}{\sigma_n}{f(x_1,\dots,x_n)}{\sigma_n \dunion \rho}$
we obtain:
\begin{equation}
  \label{eq:1}
  \tsmallstep{\sum_{i=1}^n m_i}{f(t_1,\dots,t_n)}{\sigma}{f(x_1,\dots,x_n)}{\sigma_n \dunion \rho}
  \tpkt
\end{equation}
Furthermore, by Lemma~\ref{l:4} and the induction hypothesis there exists
a substitution $\sigma'$ such that 
\begin{equation}
  \label{eq:2}
  \tsmallstep{m_0}{f(x_1,\dots,x_n)}{\sigma_n \dunion \rho}{v}{\sigma'}
  \tkom
\end{equation}
where $\sigma'$ extends $\sigma_n \dunion \rho$ (and thus also $\sigma$
as $\dom(\sigma_n) \cap \dom(\rho) = \varnothing$). 
From~\eqref{eq:1} and~\eqref{eq:2} we obtain $\tsmallstep{m}{t}{\sigma}{v}{\sigma'}$.
\end{enumerate}
This establishes the direction from left to right. 
Now we consider the direction form right to left. The proof of the first reduction
$\smallstep{m}{t}{\sigma}{u}{\sigma''}$ in $D$ is denoted as~$\Xi$.
\begin{enumerate}
\item Suppose $\Xi$ has either of the following forms
  \begin{equation*}
    \infer{\smallstep{0}{x}{\sigma}{v}{\sigma}}{%
     x\sigma = v%
     }
     \qquad
     \infer{\smallstep{0}{c(x_1,\dots,x_n)}{\sigma}{c(v_1,\dots,v_n)}{\sigma}}{%
       x_1\sigma = v_1 
       & \cdots
       & x_n\sigma = v_n%
     }
  \end{equation*}
  Then the lemma follows trivially.

\item Suppose $\Xi$ has the form 
  \begin{equation*}
    \infer{\smallstep{0}{f(v_1,\dots,v_n)}{\sigma}{f(x_1,\dots,x_n)}{\sigma \dunion \rho}}{%
     \forall i\colon \text{$v_i$ is a value}
     & \rho = \{x_1 \mapsto v_1,\dots,x_n \mapsto v_n\}
     & \text{$f$ is defined and all $x_i$ are fresh}
     }
     \tpkt
  \end{equation*} 
We apply the induction hypothesis to conclude 
$\eval{\sigma \dunion \rho}{m}{f(x_1,\dots,x_n)}{v}$. Moreover,
we observe that $\eval{\sigma}{0}{v_i}{v_i}$ holds for all $i=1,\dots,n$.
(This follows by a straightforward inductive argument.)
From this we derive $\eval{\sigma}{0}{f(v_1,\dots,v_n)}{v}$ as follows:
\begin{equation*}
  \infer{\eval{\sigma}{0}{f(v_1,\dots,v_n)}{v}}{%
    \eval{\sigma \dunion \rho}{m}{f(x_1,\dots,x_n)}{v}
    & \eval{\sigma}{0}{v_1}{v_1} 
    & \cdots
    & \eval{\sigma}{0}{v_n}{v_n}
    }
    \tpkt
\end{equation*}

\item Suppose $\Xi$ has the form
  \begin{equation*}
    \infer{\smallstep{1}{f(x_1,\dots,x_n)}{\sigma}{r}{\sigma \dunion \tau}}{%
       f(l_1,\dots,l_n) \to r \in \RS
       & \forall i\colon x_i\sigma = l_i\tau
      }
      \tkom
  \end{equation*}
  such that $\sigma'$ is an extension of $\sigma \dunion \tau$.
  By induction hypothesis we conclude $\eval{\sigma \dunion \tau}{m'}{r}{v}$. In conjunction
  with an application of the rule 
  \begin{equation*}
    \infer{\eval{\sigma}{m+1}{f(x_1,\dots,x_n)}{v}}{%
       f(l_1,\dots,l_n) \to r \in \RS
       & \forall i\colon x_i\sigma = l_i\tau
       & \eval{\sigma \dunion \tau}{m'}{r}{v}
     }
     \tkom
  \end{equation*}
  we derive $\eval{\sigma}{m'+1}{f(x_1,\dots,x_n)}{v}$ as required.

\item Suppose $\Xi$ has the form
  \begin{equation*}
    \infer{\smallstep{1}{f(\dots,t_i,\dots)}{\sigma}{f(\dots,u,\dots)}{\sigma''}}{%
       \smallstep{1}{t_i}{\sigma}{u}{\sigma''}
       }
       \tkom
  \end{equation*}
  such that $\sigma'$ is an extension of $\sigma''$. Then by induction hypothesis
  we obtain: $\eval{\sigma''}{m'}{f(\dots,u,\dots)}{v}$. Furthermore by induction hypothesis 
  we have $\eval{\sigma}{1}{t_i}{v_1}$

\item Suppose the initial sequence of $D$ is based on the following reductions,
where $m = \sum_{i=1}^n m_i + m'$.
  \begin{align*}
     & \tsmallstep{m_1}{f(t_1,\dots,t_n)}{\sigma}{f(v_1,\dots,t_n)}{\sigma_1} 
     \\
     & \qquad \vdots
     \\
     & \tsmallstep{m_n}{f(v_1,\dots,t_n)}{\sigma}{f(v_1,\dots,v_n)}{\sigma_n} 
     \\[1ex]
     &\smallstep{0}{f(v_1,t_2,\dots,t_n)}{\sigma_n}{f(x_1,\dots,x_n)}{\sigma_n \dunion \rho}
    \tpkt
  \end{align*}
We apply induction hypothesis on 
$\smallstep{m'}{f(x_1,\dots,x_n)}{\sigma' \dunion \rho}{v}{\sigma'}$ and conclude:
$\eval{\sigma' \dunion \rho}{m'}{f(x_1,\dots,x_n)}{v}$. 
Again by induction hypothesis and inspection of the corresponding proofs,
we obtain $\eval{\sigma_{i-1}}{m_i}{t_i}{v_i}$ for all $i=1,\dots,n$. (We
set $\sigma_0 \defsym \sigma$.)
Due to Lemma~\ref{l:9} we have $t_i\sigma_i = t_i\sigma$. 
Thus, for all $i$, $\eval{\sigma}{m_{i}}{t_i}{v_i}$.
Note that $\dom(\sigma_n) \cap \dom(\rho) = \varnothing$. Hence, 
from $\eval{\sigma_n \dunion \rho}{m'}{f(x_1,\dots,x_n)}{v}$ we obtain
$\eval{\sigma \dunion \rho}{m'}{f(x_1,\dots,x_n)}{v}$. 
Thus $\eval{\sigma}{m}{t}{v}$ follows.
\end{enumerate}
\end{proof}

We extend the notion of potential (cf.~Definition~\ref{d:potential}) to ground
terms.

\begin{definition}
\label{d:potential2}
Let $t = f(t_1,\dots,t_n) \in \TA(\DS \cup \CS)$
and let $\atypdcl{A_1 \times \cdots \times A_n}{C}{p}{q} \in \FS(f)$. Then the
\emph{potential} of $t$ is defined as follows:
\begin{equation*}
  \Phi(\typed{t}{C}) \defsym (p-q) + \Phi(\typed{t_1}{A_1}) + \cdots + \Phi(\typed{t_n}{A_n}) 
  \tpkt
\end{equation*}
\end{definition}

Note that by assumption the declaration in $\FS(f)$ is unique. 

\begin{example}[continued from Example~\ref{ex:5}]
\label{ex:6}
Recall the types of $\queue$ and $\checkF$. Let $q=\queue(f,r)$ be a queue.
We obtain
$\Phi(\typed{\checkF(q)}{\aQueue{(0,1)}}) = 3 + \Phi(\typed{q}{\aQueue{(0,1)}}) 
 = 3 + \Phi(\typed{f}{\aList{0}}) + \Phi(\typed{r}{\aList{1}}) = 3 + \len{r}$.
\end{example}

\begin{lemma}
\label{l:6}
Let $\RS$ and $\sigma$ be well-typed. 
Suppose $\tjudge{\Gamma}{p}{\typed{t}{A}}$. Then we have 
$\Phi(\typed{\sigma}{\Gamma}) + p \geqslant \Phi(\typed{t\sigma}{A})$.
\end{lemma}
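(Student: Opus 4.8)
The plan is to prove Lemma~\ref{l:6} by induction on the structure of the typing derivation $\Xi$ of $\tjudge{\Gamma}{p}{\typed{t}{A}}$, closely mirroring the case analysis already carried out for Theorem~\ref{l:2}, but now tracking the (simpler) invariant $\Phi(\typed{\sigma}{\Gamma}) + p \geqslant \Phi(\typed{t\sigma}{A})$ where $\Phi$ on the right is the extended potential of a ground term from Definition~\ref{d:potential2}. First I would dispatch the axiom cases. For the variable axiom $\tjudge{\typed{x}{A}}{0}{\typed{x}{A}}$ the inequality reads $\Phi(\typed{x\sigma}{A}) \geqslant \Phi(\typed{x\sigma}{A})$, which is an equality. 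For the constructor/defined-symbol axiom $\tjudge{\typed{x_1}{A_1^{\vec{u_1}}},\dots,\typed{x_n}{A_n^{\vec{u_n}}}}{p}{\typed{f(\vec{x})}{\atyp{C}{\vec{v}}}}$ with $\atypdcl{A_1^{\vec{u_1}} \times \cdots \times A_n^{\vec{u_n}}}{\atyp{C}{\vec v}}{p} \in \FS(f)$, we compute directly using Definition~\ref{d:potential2}: if $f \in \CS$ then $\Phi(\typed{f(\vec{x})\sigma}{C^{\vec v}}) = p + \sum_i \Phi(\typed{x_i\sigma}{A_i^{\vec{u_i}}}) = \Phi(\typed{\sigma}{\Gamma}) + p$; if $f \in \DS$ the annotated declaration carries an additional cost component, say $q$, and the ground-term potential is $(p-q) + \sum_i \Phi(\typed{x_i\sigma}{A_i^{\vec{u_i}}}) \leqslant p + \sum_i \Phi(\typed{x_i\sigma}{A_i^{\vec{u_i}}}) = \Phi(\typed{\sigma}{\Gamma}) + p$, since $q \geqslant 0$. (I would note here the mild bookkeeping discrepancy that Definition~\ref{d:potential2} writes declarations with two cost superscripts $\atypdcl{\cdots}{C}{p}{q}$ whereas the type rules use a single $p$; the intended reading is that $p$ in the rule is the first component and $q$ the second, or $q=0$, so the inequality $p-q \leqslant p$ is what makes the defined-symbol case go through.)

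Next I would handle the structural rules. The weakening rule $\tjudge{\Gamma,\typed{x}{A}}{p}{\typed{t}{C}}$ from $\tjudge{\Gamma}{p}{\typed{t}{C}}$: by IH $\Phi(\typed{\sigma}{\Gamma}) + p \geqslant \Phi(\typed{t\sigma}{C})$, and $\Phi(\typed{\sigma}{\Gamma,\typed{x}{A}}) = \Phi(\typed{\sigma}{\Gamma}) + \Phi(\typed{x\sigma}{A}) \geqslant \Phi(\typed{\sigma}{\Gamma})$ by non-negativity of potentials. The cost-relaxation rule ($p' \geqslant p$) is immediate from the IH. The sharing rule uses Lemma~\ref{l:10}: setting $\rho = \sigma \dunion \{x \mapsto z\sigma, y \mapsto z\sigma\}$ we have $\Phi(\typed{\rho}{\Gamma,\typed{x}{A_1},\typed{y}{A_2}}) = \Phi(\typed{\sigma}{\Gamma,\typed{z}{A}})$ by the additivity clause of Lemma~\ref{l:10}, while $t[x,y]\rho = t[z,z]\sigma$, so the IH transfers verbatim. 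The two subtyping rules (on a context entry, and on the result type) each follow from the second clause of Lemma~\ref{l:10} ($A \subtype B$ implies $\Phi(\typed{v}{A}) \geqslant \Phi(\typed{v}{B})$), applied on the appropriate side of the inequality.

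The genuinely substantive case is the application rule
\begin{equation*}
  \infer{\tjudge{\Gamma_1,\dots,\Gamma_n}{p}{\typed{f(t_1,\dots,t_n)}{C}}}{
    \tjudge{\typed{y_1}{A_1},\dots,\typed{y_n}{A_n}}{p_0}{\typed{f(\vec y)}{C}}
    & p = \sum_{i=0}^n p_i
    & \tjudge{\Gamma_i}{p_i}{\typed{t_i}{A_i}}\ (i=1,\dots,n)
  }
\end{equation*}
Here I would apply the IH to each premise: $\Phi(\typed{\sigma}{\Gamma_i}) + p_i \geqslant \Phi(\typed{t_i\sigma}{A_i})$ for $i = 1,\dots,n$, and to the head judgement $\Phi(\typed{\rho}{\Delta}) + p_0 \geqslant \Phi(\typed{f(\vec y)\rho}{C})$ where $\Delta = \typed{y_1}{A_1},\dots,\typed{y_n}{A_n}$ and $\rho = \{y_i \mapsto t_i\sigma\}$. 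Since $f(\vec y)$ is a single function application, the IH for the head reduces (via the axiom case) to $\Phi(\typed{f(\vec y)\rho}{C}) \geqslant (p_f - q_f) + \sum_i \Phi(\typed{t_i\sigma}{A_i})$ where $p_f - q_f$ is precisely what $\Phi(\typed{f(t_1\sigma,\dots,t_n\sigma)}{C})$ adds on top of $\sum_i \Phi(\typed{t_i\sigma}{A_i})$. Chaining: $\Phi(\typed{\sigma}{\Gamma_1,\dots,\Gamma_n}) + p = \sum_i \Phi(\typed{\sigma}{\Gamma_i}) + \sum_i p_i + p_0 \geqslant \sum_i \Phi(\typed{t_i\sigma}{A_i}) + p_0 \geqslant \Phi(\typed{f(\vec y)\rho}{C}) = \Phi(\typed{f(\vec t)\sigma}{C})$, where the last equality is again Definition~\ref{d:potential2} together with $\Phi(\typed{\rho}{\Delta}) = \sum_i \Phi(\typed{t_i\sigma}{A_i})$. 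The main obstacle I anticipate is not any single inequality but keeping the two notions of potential — the value-potential of Definition~\ref{d:potential} (used inside $\Phi(\typed{\sigma}{\Gamma})$, since $\sigma$ is normalised) and the ground-term potential of Definition~\ref{d:potential2} (used for $\Phi(\typed{t\sigma}{A})$, as $t\sigma$ may have defined symbols at the root) — carefully aligned, and in particular making the $p-q$ slack in the defined-symbol clause visible exactly where it is needed so that the defined-symbol axiom and application cases close with the "$\leqslant$" rather than "$=$". Everything else is a routine transcription of the corresponding step in the proof of Theorem~\ref{l:2}.
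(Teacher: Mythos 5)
Your proposal is correct and follows essentially the same route as the paper: induction on the typing derivation $\Xi$, with the axiom cases settled by direct computation of the (extended) potential, the application case by chaining the induction hypotheses for the arguments with the head judgement via $\rho = \{x_i \mapsto t_i\sigma\}$, and the structural, sharing and subtyping rules dispatched through Lemma~\ref{l:10} and non-negativity. The only difference is cosmetic: the paper treats the $f(\vec{x})$ axiom uniformly as an equality (effectively taking the second cost component $q=0$ in Definition~\ref{d:potential2}), whereas you make the $(p-q)\leqslant p$ slack for defined symbols explicit.
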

\begin{proof}
Let $\Xi$ denote the proof of $\tjudge{\Gamma}{p}{\typed{t}{A}}$. 
\begin{enumerate}
\item Let $t = x$ and thus wlog.\ $\Xi$ is of form
  \begin{equation*}
     \infer{\tjudge{\typed{x}{A}}{0}{\typed{x}{A}}}{}
     \tpkt
  \end{equation*}
Then $\Phi(\typed{\sigma}{\Gamma}) = \Phi(\typed{x\sigma}{A}) = \Phi(\typed{t\sigma}{A})$,
from which the lemma follows.

\item Let $t = f(x_1,\dots,x_n)$ where $f \in \CS \cup \DS$. Thus wlog.\ $\Xi$ is of
form
\begin{equation*}
  \infer{\tjudge{\typed{x_1}{A_1^{\vec{u_1}}},\dots,\typed{x_n}{A_n^{\vec{u_n}}}}{p}{\typed{f(x_1,\dots,x_n)}{\atyp{C}{\vec{v}}}}}{%
      f \in \CS \cup \DS
      &
      \atypdcl{A_1^{\vec{u_1}} \times \cdots \times A_n^{\vec{u_n}}}{\atyp{C}{\vec{v}}}{p} \in \FS(f)
    }
    \tpkt
\end{equation*}
Hence we obtain
\begin{equation*}
  \Phi(\typed{\sigma}{\Gamma}) + p = 
  \sum_{i=1}^n \Phi(\typed{x_i\sigma}{A_i^{\vec{u_i}}}) + p
  = \Phi(\typed{t\sigma}{C^{\vec{v}}})
  \tkom
\end{equation*}
and the lemma follows.

\item Suppose $t=f(t_1,\dots,t_n)$, such that $\vec{t} \not\in \VS$ 
and $f \in \CS \cup \DS$. Thus $\Xi$ is of form
\begin{equation*}
    \infer{\tjudge{\Gamma_1,\dots,\Gamma_n}{p}{\typed{f(t_1,\dots,t_n)}{A}}}{%
      \tjudge{\overbrace{\typed{x_1}{A_1},\dots,\typed{x_n}{A_n}}^{{} =: \Delta}}{p_0}{%
             \typed{f(x_1,\dots,x_n)}{A}}
      & \tjudge{\Gamma_1}{p_1}{\typed{t_1}{A_1}}
      & \cdots
      & \tjudge{\Gamma_n}{p_n}{\typed{t_n}{A_n}}
    }
    \tkom
\end{equation*}
where $p = \sum_{i=0}^n p_i$.
Then by induction hypothesis we have 
$\Phi(\typed{\sigma}{\Gamma_i}) + p_i \geqslant \Phi(\typed{t_i\sigma}{A_i})$ for
all $i=1,\dots,n$.
Hence 
$\sum_{i=1}^n \Phi(\typed{\sigma}{\Gamma_i}) + \sum_{i=1}^n p_i 
\geqslant \sum_{i=1}^n \Phi(\typed{t_i\sigma}{A_i})$.
Let $\rho \defsym \{x_1 \mapsto t_1\sigma,\dots,x_n \mapsto t_n\sigma\}$.
Again by induction hypothesis we have
$\Phi(\typed{\rho}{\Delta}) + p_0 \geqslant \Phi(\typed{f(x_1,\dots,x_n)\rho}{A})$.
Note that $f(x_1,\dots,x_n)\rho = t\sigma$ and $x_i\rho = t_i\sigma$ 
by construction. We obtain
\begin{align*}
  \Phi(\typed{\sigma}{\Gamma}) + \sum_{i=0}^n p_i &= \sum_{i=1}^n \Phi(\typed{\sigma}{\Gamma_i}) + p_0
  \geqslant \sum_{i=1}^n \Phi(\typed{t_i\sigma}{A_i}) + p_0
  \\
  &= \sum_{i=1}^n \Phi(\typed{x_i\rho}{A_i}) + p_0
  = \Phi(\typed{\rho}{\Delta}) + p_0 \\
  & \geqslant \Phi(\typed{t\sigma}{A})
  \tpkt
\end{align*}

\item Suppose $\Xi$ is of form:
  \begin{equation*}
    \infer{\tjudge{\Gamma}{p'}{\typed{t}{C}}}{%
        \tjudge{\Gamma}{p}{\typed{t}{C}}
        &
        p' \geqslant p
        }
  \end{equation*}
By induction hypothesis, we have 
$\Phi(\typed{\sigma}{\Gamma}) + p \geqslant \Phi(\typed{t\sigma}{A})$.
Then the lemma follows from the assumption $p' \geqslant p$. 

\item Suppose $\Xi$ ends with one of the following
structural rules
\begin{equation*}
  \infer{\tjudge{\Gamma, \typed{x}{A}}{p}{\typed{t}{C}}}{%
      \tjudge{\Gamma}{p}{\typed{t}{C}}
      }
  \hspace{10ex}
  \infer{\tjudge{\Gamma, \typed{z}{A}}{p}{\typed{t[z,z]}{C}}}{%
      \tjudge{\Gamma, \typed{x}{A_1}, \typed{y}{A_2}}{p}{\typed{t[x,y]}{C}}%
      & 
      \share{A}{A_1,A_2}
      }
\end{equation*}
We only consider the second rule, as the first alternatives
follows trivially. 
Let $\rho \defsym \sigma \dunion \{x \mapsto z\sigma, y \mapsto z\sigma\}$;
by induction hypothesis, we have 
$\Phi(\typed{\rho}{\Gamma,\typed{x}{A_1}, \typed{y}{A_2} }) + p \geqslant 
\Phi(\typed{t[x,y]\rho}{A})$. 
By definition of $\rho$ and Lemma~\ref{l:10}, we obtain
\begin{equation*}
  \Phi(\typed{\sigma}{\Gamma,\typed{z}{A}}) = 
  \Phi(\typed{\rho}{\Gamma, \typed{x}{A_1}, \typed{y}{A_2}})
  \tpkt
\end{equation*}
Hence $\Phi(\typed{\sigma}{\Gamma,\typed{z}{A}}) + p \geqslant \Phi(\typed{t[z,z]\sigma}{A})$
follows from $t[x,y]\rho = t[z,z]\sigma$.

\item Suppose $\Xi$ ends either in a sub- or in a supertyping rule:
\begin{equation*}
\infer{\tjudge{\Gamma, \typed{x}{A}}{p}{\typed{t}{C}}}{%
      \tjudge{\Gamma, \typed{x}{B}}{p}{\typed{t}{C}}
      &
      A \subtype B
    }
\hspace{10ex}
    \infer{\tjudge{\Gamma}{p}{\typed{t}{C}}}{%
      \tjudge{\Gamma}{p}{\typed{t}{D}}
      &
      D \subtype C
      }
\end{equation*}
Consider the second rule. We have to show that 
$\Phi(\typed{\sigma}{\Gamma}) + p \geqslant \Phi(\typed{t\sigma}{C})$.
This follows from induction hypothesis, which yields
$\Phi(\typed{\sigma}{\Gamma}) + p \geqslant \Phi(\typed{t\sigma}{D})$
as $\Phi(\typed{t\sigma}{D}) \geqslant  \Phi(\typed{t\sigma}{C})$ by
definition of the subtyping relation. 
The argument for the first rule is similar.
This concludes the inductive argument.
\end{enumerate}
\end{proof}

We obtain our second soundness result.

\begin{theorem}
\label{l:5}
Let $\RS$ and $\sigma$ be well-typed.
Suppose $\tjudge{\Gamma}{p}{\typed{t}{A}}$ and 
$\smallstep{m}{t}{\sigma}{u}{\sigma'}$.
Then $\Phi(\typed{\sigma}{\Gamma}) - \Phi(\typed{u\sigma'}{A}) + p \geqslant m$.
Thus if for all ground basic terms $t$ and types $A$: 
$\Phi(\typed{t}{A}) \in \bigO(n^k)$, where $n = \size{t}$, 
then $\rc_{\RS}(n) \in \bigO(n^k)$.
\end{theorem}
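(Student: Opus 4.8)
The plan is to prove the first assertion by structural induction on the derivation of $\smallstep{m}{t}{\sigma}{u}{\sigma'}$ (one of the four rules of Figure~\ref{fig:3}), using Lemma~\ref{l:6} as the main workhorse and, in the contraction case, well-typedness of $\RS$. Since the typing derivation of $\tjudge{\Gamma}{p}{\typed{t}{A}}$ may end in a structural rule (weakening, sharing, subtyping, or the relaxation rule $p' \geqslant p$), I would, within each small-step case, run a side induction on this typing derivation that strips off the structural rules exactly as in the corresponding cases of the proof of Theorem~\ref{l:2}; so it suffices to treat the four small-step rules under the assumption that the typing ends in an axiom or the application rule of Figure~\ref{fig:2}.

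For the two axiom-shaped reductions (variable $x$, and $c(x_1,\dots,x_n)$ with $c \in \CS$) and for the cost-$0$ call step $\smallstep{0}{f(v_1,\dots,v_n)}{\sigma}{f(x_1,\dots,x_n)}{\sigma \dunion \rho}$ one has $m = 0$ and $u\sigma' = t\sigma$ as terms (in the call step because the $v_i$ are values and the $x_i$ fresh, so $f(x_1,\dots,x_n)(\sigma \dunion \rho) = f(v_1,\dots,v_n) = t\sigma$); hence the claim is immediate from Lemma~\ref{l:6}, which gives $\Phi(\typed{\sigma}{\Gamma}) + p \geqslant \Phi(\typed{t\sigma}{A}) = \Phi(\typed{u\sigma'}{A})$. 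For the congruence step $\smallstep{1}{f(\dots,t_i,\dots)}{\sigma}{f(\dots,u',\dots)}{\sigma'}$ with premise $\smallstep{1}{t_i}{\sigma}{u'}{\sigma'}$, I may assume $t$ linear and read off the typing $\tjudge{\typed{y_1}{A_1},\dots,\typed{y_n}{A_n}}{p_0}{\typed{f(\vec y)}{A}}$, $\tjudge{\Gamma_j}{p_j}{\typed{t_j}{A_j}}$, $p = \sum_{j=0}^n p_j$. The induction hypothesis on the premise bounds $\Phi(\typed{u'\sigma'}{A_i})$ by $\Phi(\typed{\sigma}{\Gamma_i}) + p_i - 1$; Lemma~\ref{l:6}, together with $t_j\sigma = t_j\sigma'$ from Lemma~\ref{l:9}, bounds $\Phi(\typed{t_j\sigma'}{A_j})$ by $\Phi(\typed{\sigma}{\Gamma_j}) + p_j$ for $j \neq i$; and expanding $\Phi(\typed{f(\dots,u',\dots)\sigma'}{A})$ via Definition~\ref{d:potential2}, using nonnegativity of the cost parameters, yields $\Phi(\typed{f(\dots,u',\dots)\sigma'}{A}) \leqslant \Phi(\typed{\sigma}{\Gamma}) + p - 1$, i.e.\ the claim with $m = 1$.

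The crux is the contraction step $\smallstep{1}{f(x_1,\dots,x_n)}{\sigma}{r}{\sigma \dunion \tau}$ with $f(l_1,\dots,l_n) \to r \in \RS$ and $x_i\sigma = l_i\tau$. Here (after stripping structural rules) $\Gamma = \typed{x_1}{A_1},\dots,\typed{x_n}{A_n}$ with $\atypdcl{A_1 \times \cdots \times A_n}{A}{p} \in \FS(f)$, and the key sub-step is a \emph{pattern-matching decomposition}: since each $x_i\sigma = l_i\tau$ is a value of annotated type $A_i$ and $l_i$ is a constructor term, there exist annotated types $B_{ij}$ for the variables $y_{ij}$ of $l_i$ and a cost $k_i$ such that $\tjudge{\typed{y_{i1}}{B_{i1}},\dots}{k_i}{\typed{l_i}{A_i}}$ is derivable and $\Phi(\typed{x_i\sigma}{A_i}) = k_i + \sum_j \Phi(\typed{y_{ij}\tau}{B_{ij}})$. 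This is precisely the identity used (and left implicit) in the third case of the proof of Theorem~\ref{l:2} (the ``inspection on the case for the constructors''), and left-linearity of $\RS$ is what lets the $B_{ij}$ be assembled into a single context $\Delta = \typed{y_1}{B_1},\dots,\typed{y_\ell}{B_\ell}$ with $\Phi(\typed{\tau}{\Delta}) = \sum_{i,j} \Phi(\typed{y_{ij}\tau}{B_{ij}})$. Well-typedness of $f$ then supplies $\tjudge{\Delta}{p - 1 + \sum_i k_i}{\typed{r}{A}}$; Lemma~\ref{l:6} applied to this judgement and $\tau$ gives $\Phi(\typed{\tau}{\Delta}) + p - 1 + \sum_i k_i \geqslant \Phi(\typed{r\tau}{A}) = \Phi(\typed{r(\sigma \dunion \tau)}{A})$; and since $\Phi(\typed{\sigma}{\Gamma}) = \sum_i \Phi(\typed{x_i\sigma}{A_i}) = \Phi(\typed{\tau}{\Delta}) + \sum_i k_i$ we obtain $\Phi(\typed{\sigma}{\Gamma}) - \Phi(\typed{r(\sigma \dunion \tau)}{A}) + p \geqslant 1 = m$. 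I expect this decomposition (and the bookkeeping of which annotated type each subterm receives) to be the only genuine obstacle; the remaining cases are routine.

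For the ``Thus'' part I would combine the first assertion (equivalently Theorem~\ref{l:2}, via Lemma~\ref{l:3}) with Proposition~\ref{p:1}. Given a ground basic term $t = f(v_1,\dots,v_\ell)$ with $\size{t} \leqslant n$, fix any derivable $\tjudge{\typed{x_1}{A_1},\dots,\typed{x_\ell}{A_\ell}}{p_0}{\typed{f(\vec x)}{C}}$ and put $\sigma = \{x_i \mapsto v_i\}$. By Proposition~\ref{p:1}, $\dheight(t,\rsrew) = m$ where $\eval{\sigma}{m}{f(\vec x)}{v}$, and Theorem~\ref{l:2}, after dropping the nonnegative term $\Phi(\typed{v}{C})$, gives $\dheight(t,\rsrew) \leqslant \Phi(\typed{\sigma}{\Gamma}) + p_0 = \sum_{i=1}^\ell \Phi(\typed{v_i}{A_i}) + p_0$. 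Since $\size{v_i} \leqslant n$ each summand is in $\bigO(n^k)$ by hypothesis, while the arity $\ell$ and the constant $p_0$ are bounded in terms of $\RS$ only; hence $\dheight(t,\rsrew) \in \bigO(n^k)$ uniformly over basic $t$ with $\size{t} \leqslant n$, i.e.\ $\rc_{\RS}(n) \in \bigO(n^k)$.
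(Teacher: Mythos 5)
Your proposal is correct and follows essentially the same route as the paper: a main induction on the small-step derivation with a side induction on the typing derivation to strip the structural rules, Lemma~\ref{l:6} doing the work in the cost-zero cases, and the contraction case handled exactly as in the paper via the constructor decomposition of $\Phi(\typed{x_i\sigma}{A_i})$ into $k_i$ plus the potentials of $y_{ij}\tau$, well-typedness of $\RS$, and Lemma~\ref{l:6} applied to the typing of the right-hand side. The only cosmetic difference is that in the congruence case you unfold the potential directly from Definition~\ref{d:potential2} where the paper invokes Lemma~\ref{l:6} on the head judgement, which amounts to the same computation.
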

\begin{proof}
Let $\Pi$ be the proof of the judgement $\smallstep{m}{t}{\sigma}{u}{\sigma'}$
and let $\Xi$ denote the proof of $\tjudge{\Gamma}{p}{\typed{t}{A}}$. 
The proof proceeds by main-induction on the length of $\Pi$
and by side-induction on the length of $\Xi$.
We focus on some interesting cases.

\begin{enumerate}
\item Suppose $\Pi$ has the form 
\begin{equation*}
    \infer{\smallstep{0}{x}{\sigma}{u}{\sigma}}{%
     x\sigma = u%
     }
    \tkom
\end{equation*}
such that $t = x$ and $u = x\sigma$. As $\sigma$ is normalised $u$ is
a value. Wlog.\ we can assume that $\Xi$ is of form $\tjudge{\typed{x}{A}}{0}{\typed{x}{A}}$. 
It suffices to show $\Phi(\typed{\sigma}{\Gamma}) \geqslant \Phi(\typed{u\sigma}{A})$,
which follows from Lemma~\ref{l:6} as $x\sigma = u = u\sigma$.

\item Suppose $\Pi$ has the form
\begin{equation*}
    \infer{\smallstep{0}{c(x_1,\dots,x_n)}{\sigma}{c(u_1,\dots,u_n)}{\sigma}}{%
       x_1\sigma = u_1 
       & \cdots
       & x_n\sigma = u_n%
     }
     \tkom
\end{equation*}
such that $t = c(x_1,\dots,x_n)$ and $u = c(x_1\sigma,\dots,x_n\sigma)$, which
again is a value. Further let $\Xi$ end in the judgement:
\begin{equation*}
    \tjudge{\typed{x_1}{A_1^{\vec{u_1}}},\dots,\typed{x_n}{A_n^{\vec{u_n}}}}{p}{\typed{c(x_1,\dots,x_n)}{\atyp{C}{\vec{v}}}}
    \tpkt
\end{equation*}
Let $\Gamma = \typed{x_1}{A_1^{\vec{u_1}}},\dots,\typed{x_n}{A_n^{\vec{u_n}}}$; by Lemma~\ref{l:6} we have 
$\Phi(\typed{\sigma}{\Gamma}) + p \geqslant \Phi(\typed{t\sigma}{A}) = 
\Phi(\typed{u\sigma}{A})$ as $t\sigma = u = u\sigma$.

\item Suppose $\Pi$ has the form 
  \begin{equation*}
   \infer{\smallstep{0}{f(v_1,\dots,v_n)}{\sigma}{f(x_1,\dots,x_n)}{\sigma \dunion \rho}}{%
     \forall i\colon \text{$v_i$ is a value}
     & \rho = \{x_1 \mapsto v_1,\dots,x_n \mapsto v_n\}
     & \text{$f$ is defined and all $x_i$ are fresh}
     }    
  \end{equation*}
Then $t = f(v_1,\dots,v_n)$ is ground, as all $v_i$ are values. Hence, we have
\begin{equation*}
  t\sigma = t =  f(x_1,\dots,x_n)\rho = f(x_1,\dots,x_n)(\sigma \dunion \rho)
  \tpkt
\end{equation*}
The last equality follows as $\dom(\sigma) \cap \dom(\rho) = \varnothing$.
By Lemma~\ref{l:6} we have $\Phi(\typed{\sigma}{\Gamma}) + p \geqslant \Phi(\typed{t\sigma}{A})$.
Then the theorem follows as $t\sigma = f(x_1,\dots,x_n)(\sigma \dunion \rho)$ from
above.
 
\item Suppose $\Pi$  has the form
  \begin{equation*}
  \infer{\smallstep{1}{f(x_1,\dots,x_n)}{\sigma}{r}{\sigma \dunion \tau}}{%
       f(l_1,\dots,l_n) \to r \in \RS
       & \forall i\colon x_i\sigma = l_i\tau
      }
     \tpkt
  \end{equation*}
Then $t=f(x_1,\dots,x_n)$ and $f(x_1,\dots,x_n)\sigma = f(l_1,\dots,l_n)\tau$.
Suppose $\Var(f(\vec{l})) = \{y_{1},\dots,y_{\ell}\}$ and let
$\Var(l_i) = \{y_{i1},\dots,y_{il_i}\}$ for $i \in \{1,\dots,n\}$. 
As $\RS$ is left-linear we have $\Var(f(l_1,\dots,l_n)) = \biguplus_{i=1}^n \Var(l_i)$.
We set  $\Gamma = \typed{x_1}{A_1},\dots,\typed{x_n}{A_n}$.
By the assumption $\tjudge{\Gamma}{p}{\typed{t}{A}}$ and 
well-typedness of $\RS$ we obtain 
\begin{equation}
\label{eq:5}
\tjudge{\overbrace{\typed{y_1}{B_1}, \dots, \typed{y_{\ell}}{B_\ell}}^{{} =: \Delta}}{p - 1 + \sum_{i=1}^n k_i}{\typed{r}{C}}
    \tkom
\end{equation}
as in~\eqref{eq:welltyped}. We have
\begin{align*}
  \Phi(\typed{\sigma}{\Gamma}) + p & = \sum_{i=1}^n \Phi(\typed{x_i\sigma}{A_i}) + p\\
  & = \sum_{i=1}^n \left( k_i + \Phi(\typed{y_{i1}\tau}{B_{i1}}) + \cdots + 
                        \Phi(\typed{y_{il_i}\tau}{B_{il_i}}) \right) + p \\
  & = \Phi(\typed{\tau}{\Delta}) + \sum_{i=1}^n k_i + (p-1) + 1 
  \\
  & \geqslant \Phi(\typed{r\tau}{C}) + 1 
  \geqslant \Phi(\typed{r(\sigma \dunion \tau)}{C}) +1
  \tpkt
\end{align*} 
Here the first equality follows by an inspection on the cases for the constructors and
$\Phi(\typed{\tau}{\Delta}) + \sum_{i=1}^n k_i + (p-1) \geqslant \Phi(\typed{r\tau}{C})$
follows due to Lemma~\ref{l:6} and~\eqref{eq:5}. Furthermore note
that $r\tau = r(\sigma \dunion \tau)$, as $\dom(\sigma) \cap \dom(\tau) = \varnothing$.

\item Suppose the last rule in $\Pi$ has the form
\begin{equation*}
   \infer{\smallstep{1}{f(t_1,\dots,t_n)}{\sigma}{f(u,\dots,t_n)}{\sigma'}}{%
       \smallstep{1}{t_1}{\sigma}{u}{\sigma'}
       }
       \tpkt
\end{equation*}
Wlog.\ the last rule in the type inference $\Xi$ is of the following form,
where we can assume that every variable occurs at most once in $f(t_1,\dots,t_n)$.
\begin{equation*}
  \infer{\tjudge{\underbrace{\Gamma_1,\dots,\Gamma_n}_{{} =: \Gamma}}{p}{\typed{f(t_1,\dots,t_n)}{C}}}{%
      & \tjudge{\overbrace{\typed{x_1}{A_1},\dots,\typed{x_n}{A_n}}^{{} =: \Delta}}{p_0}{%
             \typed{f(\vec{x})}{C}}
      \tjudge{\Gamma_1}{p_1}{\typed{t_1}{A_1}}
      & \cdots
      & \tjudge{\Gamma_n}{p_n}{\typed{t_n}{A_n}}
      & p = \sum_{i=0}^n p_i%
    }
        \tpkt
\end{equation*}
By induction hypothesis on $\smallstep{1}{t_1}{\sigma}{u}{\sigma'}$ 
and $\tjudge{\Gamma_1}{p_1}{\typed{t_1}{A_1}}$ we obtain 
(i) $\Phi(\typed{\sigma}{\Gamma_1}) - \Phi(\typed{u\sigma'}{A_1}) + p_1 \geqslant
1$ and $n-1$ applications of Lemma~\ref{l:6} yield
(ii) $\Phi(\typed{\sigma}{\Gamma_i}) + p_{i} \geqslant \Phi(\typed{t_i\sigma}{A_i})$
for all $i=2,\dots,n$. 
We set $\rho \defsym \{x_1 \to u\sigma', x_2 \to t_2\sigma, \dots, x_n \to t_n\sigma\}$.
Another application of Lemma~\ref{l:6} on $\tjudge{\Delta}{p_0}{\typed{f(x_1,\dots,x_n)}{C}}$ yields 
(iii) $\Phi(\typed{\rho}{\Delta}) + p_0 \geqslant \Phi(\typed{f(x_1\rho,x_2\rho,\dots,x_n\rho)}{C})$. 
Finally, we observe $\Phi(\typed{\sigma}{\Gamma}) = \sum_{i=1}^n \Phi(\typed{\sigma}{\Gamma_i}$.
The theorem follows by combining the equations in (i)--(iii).

\item Suppose $\Xi$ is of form:
\begin{equation*}
  \infer{\tjudge{\Gamma}{p'}{\typed{t}{C}}}{%
    \tjudge{\Gamma}{p}{\typed{t}{C}}
    &
    p' \geqslant p
  }
\end{equation*}
By side-induction on $\tjudge{\Gamma}{p}{\typed{t}{C}}$
and $\smallstep{m}{t}{\sigma}{u}{\sigma'}$ we conclude
that $\Phi(\typed{\sigma}{\Gamma}) - \Phi(\typed{u\sigma'}{A}) + p \geqslant m$.
Then the theorem follows from the assumption $p' \geqslant p$. 

\item Suppose $\Xi$ is of form:
\begin{equation*}
  \infer{\tjudge{\Gamma, \typed{x}{A}}{p}{\typed{t}{C}}}{%
      \tjudge{\Gamma}{p}{\typed{t}{C}}
      }
\end{equation*}
We conclude by side-induction that 
$\Phi(\typed{\sigma}{\Gamma}) - \Phi(\typed{u\sigma'}{A} + p \geqslant m$.
As $\Phi(\typed{\sigma}{\Gamma, \typed{x}{A}}) \geqslant \Phi(\typed{\sigma}{\Gamma})$
the theorem follows.

\item Suppose $\Xi$ is of form:
\begin{equation*}
  \infer{\tjudge{\Gamma, \typed{z}{A}}{p}{\typed{t[z,z]}{C}}}{%
      \tjudge{\Gamma, \typed{x}{A_1}, \typed{y}{A_2}}{p}{\typed{t[x,y]}{C}}%
      & 
      \share{A}{A_1,A_2}
      }
\end{equation*}
By assumption $\smallstep{m}{t[z,z]}{\sigma}{u}{\sigma'}$; let
$\rho \defsym \sigma \dunion \{x \mapsto z\sigma, y \mapsto z\sigma\}$.
By side-induction on 
$\tjudge{\Gamma, \typed{x}{A_1}, \typed{y}{A_2}}{p}{\typed{t[x,y]}{C}}$
and  $\smallstep{m}{t[x,y]}{\rho}{u}{\sigma'}$ we conclude that for 
all $\Phi(\typed{\rho}{\Gamma,\typed{x}{A_1}, \typed{y}{A_2}}) - 
\Phi(\typed{u\sigma'}{A}) + p \geqslant m$. 
By definition of $\rho$ and Lemma~\ref{l:10}, we obtain
$\Phi(\typed{\sigma}{\Gamma,\typed{z}{A}}) = 
\Phi(\typed{\rho}{\Gamma, \typed{x}{A_1}, \typed{y}{A_2}})$, from which the theorem follows.

\item Suppose $\Xi$ ends either in a sub- or in a supertyping rule:
\begin{equation*}
\infer{\tjudge{\Gamma, \typed{x}{A}}{p}{\typed{t}{C}}}{%
      \tjudge{\Gamma, \typed{x}{B}}{p}{\typed{t}{C}}
      &
      A \subtype B
    }
\hspace{10ex}
    \infer{\tjudge{\Gamma}{p}{\typed{t}{C}}}{%
      \tjudge{\Gamma}{p}{\typed{t}{D}}
      &
      D \subtype C
      }
\end{equation*}
Consider the first rule.
By assumption $\smallstep{m}{t}{\sigma}{u}{\sigma'}$ and by
definition $\Phi(\typed{\sigma}{\Gamma, \typed{x}{A}}) \geqslant 
\Phi(\typed{\sigma}{\Gamma, \typed{x}{B}})$. Thus the theorem follows
by side-induction hypothesis.
\end{enumerate}  
\end{proof}

\section{Typed Polynomial Interpretations}
\label{TypedPI}

We adapt the concept of polynomial interpretation to typed
TRSs. For that we suppose a mapping $\interdomain{\cdot}$ that assigns to
every \emph{annotated} type $C$ a subset of the natural numbers, whose elements are
ordered with $>$ in the standard way. The set $\interdomain{C}$ is called the
\emph{interpretation} of $C$.

\begin{definition}
\label{d:interpretation}
An \emph{interpretation $\gamma$ of function symbols} is a mapping from
function symbols and types to functions over $\N$.
Consider a function symbol $f$ and an annotated type $C$ such that
$\FS(f) \owns \atypdcl{A_1 \times \cdots \times A_n}{C}{p}$. Then the interpretation 
$\gamma(f,C) \colon \interdomain{A_1} \times \cdots \times \interdomain{A_n} \to 
\interdomain{C}$ of $f$ is defined as follows:
\begin{equation*}
  \gamma(f,C)(x_1,\dots,x_n) \defsym x_1 + \cdots + x_n + p \tpkt
\end{equation*}
\end{definition}

Note that by assumption the declaration in $\FS(f)$ is unique and thus
$\gamma(f,C)$ is unique. Interpretations of function symbols naturally extend to
interpretation on ground terms.
\begin{equation*}
  \groundinter{\typed{f(t_1,\dots,t_n)}{C}} \defsym 
  \gamma(f,C)(\groundinter{\typed{t_1}{A_1}},\dots,\groundinter{\typed{t_n}{A_n}})
    \tpkt
\end{equation*}
Let $\RS$ be a well-typed and let the interpretation $\gamma$ of 
function symbols in $\FS$ be induced by the well-typing of $\RS$. 
Then by construction $\groundinter{\typed{t}{A}} = \Phi(\typed{t}{A})$.

\begin{example}[continued from Example~\ref{ex:5}]
\label{ex:7}
Based on Definition~\ref{d:interpretation} we obtain the following definitions
of the interpretation of function symbols $\gamma$. We start
with the constructor symbols.
\begin{alignat*}{4}
  \gamma(\zero,\aNat{p}) &= 0 & \hspace{2ex}
  \gamma(\mS,\aNat{p})(x) &= x + p & \hspace{2ex}
  \gamma(\errorHead,\aNat{p}) &= 0 
  \\
  \gamma(\nil,\aList{q}) &= 0 & \hspace{2ex}
  \gamma(\cons,\aList{q})(x,y) &= x + y + q & \hspace{2ex}
  \gamma(\errorTail,\aQueue{(0,1)}) &= 0 
  \\
  \gamma(\queue,\aQueue{(0,1)})(x,y) &= x + y \tkom 
\end{alignat*}
where $p,q \in \N$. 
Similarly the definition of $\gamma$ for defined symbols follows from the
signature detailed in Example~\ref{ex:5}. 
It is not difficult to see that for any rule $l \to r \in \RSa$ and any
substitution $\sigma$, we obtain $\groundinter{l\sigma} > \groundinter{r\sigma}$.
We show this for rule~1. 
\begin{align*}
  \groundinter{\typed{\checkF(\queue(\nil,r\sigma))}{\aQueue{(0,1)}}} &= 
  \groundinter{\typed{r\sigma}{\aList{1}}} + 3 > 0 \\
  & = \groundinter{\typed{\rev(r\sigma)}{\aList{0}}} + \groundinter{\typed{\nil}{\aList{1}}} \\
  & = \groundinter{\typed{\queue(\rev(r\sigma),\nil)}{\aQueue{(0,1)}}} \tpkt
\end{align*}
%
Orientability of $\RSa$ with the above given interpretation implies
the optimal linear innermost runtime complexity.
\end{example}

We lift the standard order $>$ on the interpretation domain $\N$ to an order on terms
as follows. Let $s$ and $t$ be terms of type $A$. Then $s > t$ if 
for all well-typed substitutions $\sigma$ we have
$\groundinter{\typed{s\sigma}{A}} > \groundinter{\typed{t\sigma}{A}}$.

\begin{theorem}
\label{l:8}
Let $\RS$ be well-typed, constructor TRS over signature $\FS$ and 
let the interpretation of function symbols $\gamma$ 
be induced by the type system. Then $l > r$ 
for any rule ${l \to r} \in \RS$.
Thus if for all ground basic terms $t$ and types $A$: $\groundinter{\typed{t}{A}} \in \bigO(n^k)$, 
where $n = \size{t}$, then $\rc_{\RS}(n) \in \bigO(n^k)$.
\end{theorem}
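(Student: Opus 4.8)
The plan is to prove the inequality $l > r$ for every rule $l \to r \in \RS$ by combining Lemma~\ref{l:6} (the "potential over-approximation" lemma) with the soundness of the small-step semantics (Theorem~\ref{l:5}), and then to observe that the resulting order on terms does the job because, by construction, $\groundinter{\typed{t}{A}} = \Phi(\typed{t}{A})$ on ground terms. Concretely, fix a rule $l \to r$, write it as $f(l_1,\dots,l_n) \to r$ with $l_i \in \TA(\CS,\VS)$, and fix a well-typed substitution $\sigma$ (so $l\sigma$ and $r\sigma$ are ground). The first step is to instantiate the small-step rule $\smallstep{1}{f(x_1,\dots,x_n)}{\rho}{r}{\rho \dunion \tau}$, where $x_i\rho$ are the values $l_i\sigma$: Theorem~\ref{l:5} applied to the typing judgement $\tjudge{\typed{x_1}{A_1},\dots,\typed{x_n}{A_n}}{p}{\typed{f(\vec{x})}{C}}$ (the one picked out by the result annotation) and this one-step reduction gives $\Phi(\typed{\rho}{\Gamma}) - \Phi(\typed{r(\rho\dunion\tau)}{C}) + p \geqslant 1$. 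Since $\Phi(\typed{\rho}{\Gamma}) + p = \Phi(\typed{f(l_1,\dots,l_n)\sigma}{C}) = \groundinter{\typed{l\sigma}{C}}$ by Definition~\ref{d:potential2} and the identification of $\groundinter{\cdot}$ with $\Phi$, and since $r(\rho\dunion\tau) = r\sigma$ up to renaming of the fresh variables, we conclude $\groundinter{\typed{l\sigma}{C}} \geqslant \groundinter{\typed{r\sigma}{C}} + 1 > \groundinter{\typed{r\sigma}{C}}$.

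The second, essentially bookkeeping, step is to make precise the matching between the variables $y_j$ of the rule and the fresh variables $x_i$, $\tau$ used in the small-step rule — this is exactly the left-linearity-based decomposition $\Var(f(\vec l)) = \biguplus_i \Var(l_i)$ already used in the proofs of Theorems~\ref{l:2} and~\ref{l:5}. One has to check that well-typedness of $\RS$ supplies a typing derivation for $r$ under a context assigning types $B_j$ to the $y_j$ compatible with the decomposition of $\Phi(\typed{\rho}{\Gamma})$ through the $l_i$ (the "$k_i$" from Definition~\ref{d:welltyped}); but rather than redo this, I would simply invoke Theorem~\ref{l:5} as a black box, since its proof already internalises all of that. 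The remaining ingredient, that $\groundinter{\typed{t}{A}} = \Phi(\typed{t}{A})$, is stated in the paragraph immediately preceding Example~\ref{ex:7} and follows by a trivial induction on $t$ from Definition~\ref{d:interpretation} versus Definition~\ref{d:potential2}.

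For the second assertion, the plan is to recall that $\groundinter{\typed{t}{A}} = \Phi(\typed{t}{A})$ and then reuse the runtime-complexity argument already established: if $\groundinter{\typed{t}{A}} \in \bigO(n^k)$ for all ground basic $t$ of size $n$, then so is $\Phi(\typed{t}{A})$, and Theorem~\ref{l:5} (or equivalently Theorem~\ref{l:2} together with Proposition~\ref{p:1}) yields $\rc_{\RS}(n) \in \bigO(n^k)$. Alternatively one can argue directly: $l > r$ for all rules means $\groundinter{\cdot}$ strictly decreases along every innermost rewrite step on ground terms, so $\dheight(t,\rsrew) \leqslant \groundinter{\typed{t}{A}}$, giving the bound immediately — this is the "interpretation method" reading and is the conceptually cleaner route.

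The main obstacle I expect is not mathematical depth but getting the variable-renaming and context-decomposition bookkeeping exactly right: the small-step rule introduces fresh $x_i$ and a fresh matcher $\tau$, while Definition~\ref{d:welltyped} is phrased in terms of the rule's own variables $y_j$ and auxiliary costs $k_i$, and one must verify that the potentials line up so that the "$+1$" from the single $\smallstep{1}{\cdot}{\cdot}{\cdot}{\cdot}$ step is genuinely absorbed. Since Theorem~\ref{l:5}'s proof (case for the $\smallstep{1}{f(x_1,\dots,x_n)}{\sigma}{r}{\sigma\dunion\tau}$ rule) already carries out precisely this computation — it derives $\Phi(\typed{\sigma}{\Gamma}) + p \geqslant \Phi(\typed{r(\sigma\dunion\tau)}{C}) + 1$ — the honest plan is to cite that case rather than reprove it, and the theorem then follows in a few lines.
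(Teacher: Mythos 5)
Your proposal is correct and follows essentially the same route as the paper's proof: instantiate the root small-step rule $\smallstep{1}{f(x_1,\dots,x_n)}{\sigma}{r}{\sigma\dunion\tau}$, apply Theorem~\ref{l:5} to the declaration judgement $\tjudge{\typed{x_1}{A_1},\dots,\typed{x_n}{A_n}}{p}{\typed{f(\vec{x})}{C}}$, and use $\groundinter{\typed{t}{A}}=\Phi(\typed{t}{A})$ together with disjointness of the substitution domains to conclude $\groundinter{\typed{l\tau}{C}}>\groundinter{\typed{r\tau}{C}}$. The only cosmetic difference is your remark that $r(\rho\dunion\tau)=r\sigma$ holds ``up to renaming''---in fact it holds on the nose since the fresh $x_i$ do not occur in $r$ and $\Var(r)\subseteq\Var(l)$, which is exactly the observation the paper makes.
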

\begin{proof}
Let $l = f(l_1,\dots,l_n)$ and let $x_1,\dots,x_n$ be
fresh variables. Suppose further $\FS(f) \owns \atypdcl{A_1 \times \cdots \times A_n}{C}{p}$. 
As $\RS$ is well-typed we have 
\begin{equation*}
  \tjudge{\overbrace{\typed{x_1}{A_1},\dots,\typed{x_n}{A_n}}^{{} =: \Gamma}}{p}{\typed{f(x_1,\dots,x_n)}{C}}
  \tkom
\end{equation*}
for $p \in \N$. 

Now suppose that $\tau$ denotes any well-typed substitution for the rule $l \to r$. 
It is standard way, we extend $\tau$ to a well-typed substitution $\sigma$ 
such that $l\tau = f(x_1,\dots,x_n)\sigma$.
By definition of the small-step semantics, we obtain
\begin{equation*}
  \smallstep{1}{f(x_1,\dots,x_n)}{\sigma}{r}{\sigma \dunion \tau} \tpkt
\end{equation*}
Then by Lemma~\ref{l:5}, $\Phi(\typed{\sigma}{\Gamma}) + p > \Phi(\typed{r(\sigma \dunion \tau)}{C})$
and by definitions, we have:
\begin{equation*}
  \Phi(\typed{l\tau}{C}) = \Phi(\typed{f(x_1\sigma,\dots,x_n\sigma)}{C})
  = \sum_{i=1}^n \Phi(\typed{x_i\sigma}{A_i}) + p
  = \Phi(\typed{\sigma}{\Gamma}) + p 
  \tpkt
\end{equation*}
Furthermore, observe that $r(\sigma \dunion \tau) = r\tau$ as 
$\dom(\sigma) \cap \dom(\tau) = \varnothing$. In sum, we
obtain $\Phi(\typed{l\tau}{C}) > \Phi(\typed{r\tau}{C})$, from which we conclude
$\groundinter{\typed{l\tau}{C}}{\gamma} > \groundinter{\typed{r\tau}{C}}$. 
As $\tau$ was chosen arbitrarily, we obtain ${\RS} \subseteq {>}$.
\end{proof}

We say that an interpretation \emph{orients} a typed TRS $\RS$, if
${\RS} \subseteq {>}$. As an immediate consequence of the theorem, we obtain the following 
corollary.
\begin{corollary}
\label{c:1}
Let $\RS$ be a well-typed and constructor TRS. Then there exists a typed
polynomial interpretation over $\N$ that orients $\RS$. 
\end{corollary}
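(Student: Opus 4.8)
The plan is to derive Corollary~\ref{c:1} directly from Theorem~\ref{l:8} by checking that the hypotheses of that theorem can always be met and that its conclusion gives exactly what the corollary asserts. First I would recall that the corollary assumes $\RS$ is a well-typed, constructor TRS; this is precisely the standing assumption under which Theorem~\ref{l:8} applies. So the only thing to supply is a concrete interpretation domain and the associated interpretation of function symbols. Here I would take, for each annotated type $C$, the interpretation $\interdomain{C} \defsym \N$ with the usual order $>$, and I would let $\gamma$ be the interpretation of function symbols \emph{induced by the type system} in the sense of Definition~\ref{d:interpretation}, i.e.\ $\gamma(f,C)(x_1,\dots,x_n) = x_1 + \cdots + x_n + p$ where $\atypdcl{A_1 \times \cdots \times A_n}{C}{p} \in \FS(f)$ is the (unique) declaration with result annotation matching $C$.

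Next I would invoke Theorem~\ref{l:8} verbatim: under these hypotheses it yields $l > r$ for every rule ${l \to r} \in \RS$, where $>$ is the order on terms lifted from the interpretation. By the definition of ``orients'' given just before the corollary statement, this says precisely that $\gamma$ orients $\RS$, so ${\RS} \subseteq {>}$. Finally I would observe that $\gamma$ is a \emph{polynomial} interpretation in the adapted, typed sense: each $\gamma(f,C)$ is the polynomial $x_1 + \cdots + x_n + p$ over $\N$, which is (weakly) monotone in each argument and takes values in $\N$, and the interpretation is built over annotated types exactly as required by the paper's notion of typed polynomial interpretation. Hence such an interpretation exists, which is the claim.

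There is essentially no obstacle here, since all the work has already been done in Theorem~\ref{l:8}; the corollary is a repackaging. The only point that deserves a sentence of care is that the construction of $\gamma$ must be well-defined, i.e.\ that for each function symbol $f$ and each result annotation $C$ there is exactly one declaration in $\FS(f)$ of the right shape --- but this is guaranteed by the uniqueness condition imposed on annotated signatures in Section~\ref{AnnotatedTypes} (and already noted after Definition~\ref{d:interpretation}). I would therefore keep the proof to two or three sentences: fix $\interdomain{C} = \N$, let $\gamma$ be induced by the type system, apply Theorem~\ref{l:8} to conclude ${\RS} \subseteq {>}$, and note that $\gamma$ is by construction a typed polynomial interpretation over $\N$.
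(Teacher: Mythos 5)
Your proposal is correct and matches the paper's own treatment: the paper states Corollary~\ref{c:1} as an immediate consequence of Theorem~\ref{l:8}, exactly as you argue, with the interpretation $\gamma$ induced by the well-typing (Definition~\ref{d:interpretation}) and the definition of ``orients'' applied directly. Your extra remark on well-definedness via uniqueness of declarations is consistent with the paper's own note following Definition~\ref{d:interpretation}.
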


At the end of Section~\ref{AnnotatedTypes} we have remarked on the
automatabilty of the obtained amortised analysis. Observe that
Theorem~\ref{l:8} gives rise to a conceptually quite different
implementation. Instead of encoding the constraints of the
typing rules in Figure~\ref{fig:2} one directly encode
the orientability constraints for each rule, cf.~\cite{contejean:2005}. 

\section{Conclusion}
\label{Conclusion}

This paper is concerned with the connection between amortised resource
analysis, originally introduced for functional programs, and polynomial
interpretations, which are frequently used in complexity and termination
analysis of rewrite systems. 

In order to study this connection we established
a novel resource analysis for typed term rewrite systems based on a potential-based
type system. This type system gives rise to polynomial bounds for innermost runtime complexity.
A key observation is that the classical notion of potential can be altered
so that not only values but any term can be assigned a potential. Ie.\ the potential function
$\Phi$ is conceivable as an interpretation. Based on this observation we have shown 
that well-typedness of a TRSs $\RS$ induces a typed polynomial
interpretation orienting $\RS$.

Apart from clarifying the connection between amortised resource analysis
and polynomial interpretation our results seems to induce two new
methods for the innermost runtime complexity of typed TRSs as indicated above.

We emphasise that these methods are not restricted to typed TRSs, 
as our cost model gives rise to a \emph{persistent} property. Here a property is persistent if, 
for any typed TRS $\RS$ the property holds iff it holds for the corresponding untyped
TRS $\RS'$. While termination is in general not persistent~\cite{TeReSe}, it is
not difficult to see that the runtime complexity is a persistent property. This
is due to the restricted set of starting terms. Thus it seems that the proposed techniques
directly give rise to novel methods of automated innermost runtime complexity analysis.

In future work we will clarify whether the established results extend to
the multivariate amortised resource analysis presented in~\cite{HAH12b}. Furthermore,
we will strive for automation to assess the viability of the established
methods.


\end{document}